\numberwithin{equation}{section}
\numberwithin{figure}{section}
\numberwithin{table}{section}
\theoremstyle{plain}
\newtheorem{prop}{Proposition}[section]
\newtheorem{lemma}[prop]{Lemma}
\newtheorem{theorem}[prop]{Theorem}
\newtheorem{corollary}[prop]{Corollary}
\theoremstyle{definition}
\newenvironment{proofoft}[1]{{\em Proof of Theorem #1. }}{$\Box$ \vspace{1em}}
\newcommand{\E}{{\mathbf E}}
\newcommand{\RR}{{\mathbb R}}
\newcommand{\Remark}{\em Remark: \rm}
\newcommand{\dpt}{{}^{\,\,}\!} 
\newcommand{\scalefactor}{0.85}
\def\P{{\mathbf P}}
\definecolor{darkred}{rgb}{.7,0,0}
\begin{document}

\title[Optimal closing of a pair trade]
{Optimal closing of a pair trade with a model containing jumps}
\date{\today}

\author[S.~Larsson]{Stig Larsson$^1$}
\address[S.~Larsson]{Department of Mathematical Sciences, Chalmers University of Technology and University of Gothenburg, SE-41296 Gothenburg, Sweden}
\email[S.~Larsson]{stig@chalmers.se}
\urladdr[S. Larsson]{http://www.math.chalmers.se/~stig}
\thanks{$^1$Supported by the Swedish Research Council (VR) and by the Swedish Foundation for Strategic Research (SSF) through GMMC, the Gothenburg Mathematical Modelling Centre.}

\author[C.~Lindberg]{Carl Lindberg}
\address[C.~Lindberg]{Department of Mathematical Sciences, Chalmers University of Technology and University of Gothenburg, SE-41296 Gothenburg, Sweden}
\email[C.~Lindberg]{clind@chalmers.se}
\thanks{}

\author[M.~Warfheimer]{Marcus Warfheimer$^2$}
\address[M.~Warfheimer]{Department of Mathematical Sciences, Chalmers University of Technology and University of Gothenburg, SE-41296 Gothenburg, Sweden}
\email[M.~Warfheimer]{marcus.warfheimer@gmail.com}
\urladdr[M. Warfheimer]{http://www.math.chalmers.se/~warfheim}
\thanks{$^2$Research partially supported by the Göran Gustafsson Foundation for Research in Natural Sciences and Medicine.}

\keywords{Pairs trading, optimal stopping, Ornstein-Uhlenbeck type process, finite element method, error estimate}
\subjclass[2000]{91B28, 65N30, 45J05}

\begin{abstract}
A pair trade is a portfolio consisting of a long position in one asset and 
a short position in another, and it is a widely applied investment strategy in the financial industry. Recently, Ekstr\"om, Lindberg and Tysk studied the problem of optimally closing a pair trading strategy when the difference of the two assets is modelled by an Ornstein-Uhlenbeck process. In this paper we study the same problem, but the model is generalized to also include jumps. More precisely we assume that the above difference is an Ornstein-Uhlenbeck type process, driven by a L\'evy process of finite activity. We prove a verification theorem and analyze a numerical method for the associated free boundary problem. We prove rigorous error estimates, which are used to draw some conclusions from numerical simulations.
\end{abstract}

\maketitle

\section{Introduction}\label{paper4:intro}

A portfolio which consists of a positive position in one asset, and a negative position in another is called a pair trade. Pairs trading was developed at Morgan Stanley in the late 1980's, and today it is one of the most common investment strategies in the financial industry. The idea behind pairs trading is quite intuitive: the investor finds two assets, for which the prices have moved together historically. When the price spread widens, the investor takes a short position in the outperforming asset, and a long position in the underperforming one with the hope that the spread will converge again, generating a profit. A main advantage of pairs trading is that the short position can, in principle, remove any exposure to market risk. For a historical evaluation of pairs trading we refer to \cite{Gatev}. 

To model the pair spread the authors in \cite{Elliot_Hoek_Malcolm} proposed a mean reverting Gaussian Markov chain which they considered to be observed in Gaussian noise. Recently, in \cite{Ekstrom_Lindberg_Tysk} the authors suggested the continuous time analogue, the so called mean reverting Ornstein-Uhlenbeck process. In this paper we generalize the model of the spread to also include possible jumps. Let $(\Omega,\mathcal F, \P)$ be a complete probability space where the following processes are defined in such a way that they are independent:
\begin{itemize}
\item[--] A standard Brownian motion $W=\{W_t\}_{t\geq 0}$.
\item[--] A Possion process $N^\lambda=\{N_t^\lambda\}_{t \geq 0}$ with intensity $\lambda>0$.
\item[--] A sequence of independent random variables $\{X_k^\varphi\}_{k=1}^\infty$ with common continuous symmetric density $\varphi$. Moreover, the support of $\varphi$ is contained in the interval $(-J,J)$ for some $J>0$.
\end{itemize}
Define the compound Poisson process $C^{\lambda,\varphi}=\{C_t^{\lambda,\varphi}\}_{t\geq 0}$ in the usual way as 
\[
C_t^{\lambda,\varphi}=\displaystyle\sum_{k=1}^{N^\lambda_t} X_k^\varphi
\]
and denote the filtration generated by $W$, $C^{\lambda,\varphi}$ and the null sets of $\mathcal F$ by $\mathbb F=\{\mathcal F_t\}_{t\geq 0}$. It is well known that this filtration satisfies the usual hypotheses (see for example \cite{Protter}). From now on, when we say that a process is a martingale, submartingale or supermartingale we mean that this is with respect to $\mathbb F$. 

Let the difference $U=\{U_t\}_{t\geq 0}$ between the assets be the unique solution of the stochastic differential equation
\begin{equation}\label{paper4:sde_OU}
dU_t = -\mu U_t \,dt+\sigma \,dW_t+ dC_t^{\lambda,\varphi},\quad t>0,
\end{equation}
where $\mu>0$, $\sigma>0$. (The solution of equation \eqref{paper4:sde_OU} is usually called a generalized Ornstein-Uhlenbeck process or an Ornstein-Uhlenbeck type process.) Sometimes we will denote the driving L\'{e}vy process in \eqref{paper4:sde_OU} by $Z^{\sigma,\lambda,\varphi}$, i.e.\
\[
Z^{\sigma,\lambda,\varphi}_t=\sigma W_t+C_t^{\lambda,\varphi},\quad t\geq 0.
\]
As discussed in \cite{Ekstrom_Lindberg_Tysk}, there is a large risk associated with a pair trading strategy. Indeed, if the market spread ceases to be mean reverting, the investor is exposed to substantial risk. Therefore, in practice the investor typically chooses in advance a stop-loss level $a<0$, which corresponds to the level of loss above which the investor will close the pair trade. For a given stop-loss level $a<0$ define 
\begin{equation}\label{paper4:tau_a}
\tau_a=\inf \{t\geq 0:\, U_t\leq a\},
\end{equation}
the first hitting time of the region $(-\infty,a]$, and the so called value function
\begin{equation}\label{paper4:V}
V(x)=\sup_{\tau} \E_x[U_{\tau_a\wedge\tau}]\quad x\in\RR,
\end{equation}
where the supremum is taken over all stopping times with respect to
$U$. (Here and in the sequel $\E_x$ means expected value when
$U_0=x$.) The major interest here is to characterize $V$, and perhaps
more importantly, to describe the stopping time where the supremum is
attained. Since the drift has the opposite sign as $U$, we have no
reason to liquidate our position as long as $U$ is negative. On the
other hand, if $U$ is positive, then the drift is working against the
investor and for large values of $U$ the size of the drift should
overcome the possible benefits from random variations. Moreover, since
the jumps are assumed to be symmetric, this indicates that there is a
stopping barrier $b>0$ with the property that we should keep our
position when $U_t<b$ and liquidate as soon as $U_t\geq b$. We note
that we cannot be sure to close the pair trade at any of the
boundaries $a$ or $b$, because the spread can exhibit jumps. This was not the case in \cite{Ekstrom_Lindberg_Tysk} and it is the major reason for the additional difficulties encountered in the present paper.

General optimal stopping theory (described for example in \cite[Ch.~3]{Peskir_Shiryaev}) 
leads us to believe that the value function is given by $V=u$, where $(u,b)$ is the solution of the free boundary problem
\begin{align}\label{paper4:FBP}
\begin{aligned}
\mathcal{G}_U u(x)&=0, && x\in (a,b), \\
u(x)&=x, && x\not\in (a,b), \\
u^\prime(b)&=1. 
\end{aligned}
\end{align}
Here $\mathcal{G}_U$ is the infinitesimal generator of $U$, which is defined on the space of twice continuously differentiable functions $f:\RR\to\RR$ with compact support:
\begin{equation}\label{paper4:gen_U}
\mathcal{G}_U f(x)=\frac{\sigma^2}{2} f^{\prime\prime}(x)-\mu x f^\prime(x)+\lambda\int_{-\infty}^\infty (f(x+y)-f(x))\varphi(y)\,dy, \quad x\in\RR.
\end{equation}
Moreover, the stopping time where the supremum in \eqref{paper4:V} is attained should be 
\begin{equation}\label{paper4:tau_b}
\tau_b=\inf\{t\geq 0:\,U_t\geq b\}.
\end{equation}
Indeed, our first result is a so called verification theorem.
\begin{theorem}\label{paper4:ver_thm}
Assume that $(u,b)$ is a classical solution of \eqref{paper4:FBP} with
\begin{itemize}
\item[a)] $\mathcal{G}_U u(x)\leq 0$, for $x>b$,
\item[b)] $u(x)\geq x$, for $x\in\RR$.
\end{itemize}
Then $u(x)=V(x)=\E_x[U_{\tau_a\wedge\tau_b}]$, for $x\in\RR$, where $V$ is given by \eqref{paper4:V}.       
\end{theorem}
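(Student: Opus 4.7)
\medskip

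\noindent\textbf{Proof proposal.} The plan is to establish $V(x)\leq u(x)$ for an arbitrary stopping time $\tau$, and then $V(x)\geq u(x)$ by specializing to $\tau=\tau_b$, so that $u=V=\E_{\cdot}[U_{\tau_a\wedge\tau_b}]$ drops out in one stroke. The central tool is Dynkin's formula for the L\'evy-driven SDE \eqref{paper4:sde_OU} applied to $u(U_t)$, localized at $\theta_n:=\tau_a\wedge\tau\wedge T\wedge\rho_n$, where $T$ is a deterministic horizon and $\rho_n$ is a sequence of stopping times reducing both the Brownian stochastic integral and the compensated-Poisson integral appearing in It\^o's formula to true martingales.

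For the upper bound, It\^o's formula combined with this localization yields
\begin{equation*}
\E_x\bigl[u(U_{\theta_n})\bigr]\;=\;u(x)+\E_x\!\left[\int_0^{\theta_n}\mathcal{G}_U u(U_{s-})\,ds\right].
\end{equation*}
On $[0,\tau_a)$ we have $U_{s-}>a$: the integrand therefore vanishes when $U_{s-}\in(a,b)$ (by the PDE in \eqref{paper4:FBP}) and is non-positive when $U_{s-}\geq b$ (by assumption (a)). Combined with $u\geq x$ from assumption (b), this gives $\E_x[U_{\theta_n}]\leq \E_x[u(U_{\theta_n})]\leq u(x)$. Letting first $\rho_n\to\infty$ and then $T\to\infty$, with uniform integrability supplied by the mean-reverting drift and the bounded jump support $(-J,J)$, one obtains $\E_x[U_{\tau_a\wedge\tau}]\leq u(x)$; the supremum over $\tau$ then delivers $V(x)\leq u(x)$.

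For the lower bound I specialize to $\tau=\tau_b$. Now on $[0,\tau_a\wedge\tau_b)$ the path lives in $(a,b)$, so $\mathcal{G}_U u(U_{s-})\equiv 0$ and the displayed Dynkin identity is an exact equality. Because $U_{\tau_a\wedge\tau_b}$ lies almost surely outside $(a,b)$---possibly after a jump overshoot of either barrier---the value-matching $u(y)=y$ for $y\notin(a,b)$ gives $u(U_{\tau_a\wedge\tau_b})=U_{\tau_a\wedge\tau_b}$, and passing to the same limits yields $u(x)=\E_x[U_{\tau_a\wedge\tau_b}]\leq V(x)$.

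The main obstacle is technical rather than conceptual: I expect to spend the most effort justifying the double limit $n,T\to\infty$ and checking that It\^o's formula really applies even though $u$ is only $C^2$ off $\{a,b\}$ (the smooth-fit $u'(b)=1$ gives $C^1$ across $b$, while value-matching gives continuity at $a$). Finiteness $\tau_a\wedge\tau_b<\infty$ a.s.\ together with $L^1$-domination of the stopped process $U_{\tau_a\wedge\tau_b\wedge T}$ should follow from positive recurrence of the OU-type process and the compactly supported jumps. Compared to \cite{Ekstrom_Lindberg_Tysk}, this is exactly where the jump case makes itself felt: the possibility that $U$ jumps across $a$ or $b$ rather than hitting them continuously forces the hypotheses to hold globally on $\RR$ rather than only at the boundary, which is precisely what assumptions (a) and (b) provide.
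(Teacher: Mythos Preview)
Your high-level strategy---show $\{u(U_{\tau_a\wedge t})\}$ is a supermartingale via It\^o's decomposition, apply optional sampling to get $V\le u$, then specialize to $\tau_b$ to get equality---is exactly the paper's. The difference lies in how the two technical obstacles you flag are actually resolved, and in both cases the paper's route is shorter than what you sketch.

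\emph{Regularity at $a$.} You note that $u$ is only continuous at $a$ and defer the question of how It\^o's formula applies. The paper does not smooth $u$ or argue by approximation; it applies the Meyer--It\^o formula for differences of convex functions, which produces an extra local-time term $\tfrac12 L_t^a(U)\bigl(u'(a+)-u'(a-)\bigr)$. The substantive step is then a lemma showing that $L^a_{\tau_a\wedge t}(U)=0$ a.s.\ when $U_0>a$: on $(0,\tau_a)$ the process stays strictly above $a$, so $U_{s-}\ge a$ throughout, and any instant with $U_{s-}=a$ before $\tau_a$ must be a jump time with an upward jump, which forces the residual local-time expression to vanish. This is the one genuinely new ingredient relative to the continuous case, and your proposal does not yet contain it.

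\emph{Passing to the limit.} You plan a double limit $\rho_n\to\infty$, $T\to\infty$ with uniform integrability from mean reversion. The paper avoids this entirely. Because $u$ is globally Lipschitz with bounded left derivative and the jump sizes are bounded, the Brownian and compensated-Poisson integrals are true martingales from the outset, so no reducing sequence $\rho_n$ is needed. Then the hard lower bound $U_{\tau_a\wedge t}\ge a-J$ (and, for the $\tau_b$ step, the two-sided bound $a-J\le u(U_{\tau_a\wedge\tau_b\wedge t})\le b+J$) lets one invoke the optional sampling theorem directly for an arbitrary stopping time, with no deterministic horizon $T$ and no separate uniform-integrability argument. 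Your route would work, but it is longer and the mean-reversion-based UI you allude to is not needed.
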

\noindent\Remark As seen from the assumptions on $\varphi$, we are assuming that the absolute value of the jumps of the process $\{U_t\}_{\geq 0}$ are bounded. The reason is that on the financial market, an asset cannot jump to arbitrarily large levels. If nothing else, the jumps are bounded by all the money in the world.

The rest of the paper is organized as follows. In Section \ref{paper4:proof_ver_thm} we prove Theorem \ref{paper4:ver_thm} and in Section \ref{paper4:numerical_FBP} we discuss a numerical solution of the free boundary problem \eqref{paper4:FBP}. We also present strong evidence for the existence and uniqueness of a solution of \eqref{paper4:FBP}.

\section{Proof of Theorem \ref{paper4:ver_thm}}\label{paper4:proof_ver_thm}

Before we start to prove Theorem~\ref{paper4:ver_thm} we need to recall some facts. From the general theory in \cite{Garroni_Menaldi} we get that the boundary value problem 
\begin{align}\label{paper4:BVP}
\begin{aligned}
\mathcal{G}_U u(x)&=0, && x\in (a,b), \\
u(x)&=x, && x\not\in (a,b),
\end{aligned}
\end{align}
has a unique classical solution and that such a solution belongs to the space
\[
C^2(\RR\setminus\{a,b\})\cap C^1(\RR\setminus\{a,b\})\cap C(\RR).
\]
Moreover, the finite left and right limits of $u^\prime$ and $u^{\prime\prime}$ exist at $a$ and $b$. Although these facts follow from \cite{Garroni_Menaldi}, we present in Theorem~\ref{paper4:thm:existence} a self-contained proof for the simpler situation that we consider here. Hence, if $(u,b)$ is a classical solution of \eqref{paper4:FBP}, then necessarily 
\[
u\in C^2(\RR\setminus\{a,b\})\cap C^1(\RR\setminus\{a\})\cap C(\RR)
\]
with finite left and right limits of $u^\prime$ and $u^{\prime\prime}$ everywhere. Furthermore, recall a generalized version of It\^o's formula for convex functions (see for example \cite[Ch.~4]{Protter}):
\begin{theorem}[Meyer-It\^o formula]\label{paper4:meyer_ito}
Let $X=\{X_t\}_{\geq 0}$ be a semimartingale and let $f$ be the difference of two convex functions. Then 
\[
\begin{split}
f(X_t)&=f(X_0)+\int_{0+}^t D^-f(X_{s-}) \,dX_s \\
&\quad+ \displaystyle\sum_{0<s\leq t} \big(f(X_s)-f(X_{s-})-D^-f(X_{s-})\dpt\Delta X_s\big) \\
&\quad+\frac{1}{2}\int_{-\infty}^\infty L_t^y(X)\, d\mu(y),
\end{split}
\]
where $D^-f$ is the left derivative of $f$, $\mu$ is a signed measure which is the second generalized derivative of $f$ and $\{L_t^a(X)\}_{t\geq 0}$ is the local time process of $X$ at $a$. 
\end{theorem}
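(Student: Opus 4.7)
The plan is to proceed by smooth approximation of convex functions. By linearity of every term in the claimed identity with respect to $f$ and the associated measure $\mu$, it suffices to prove the formula for a single convex function; the general case then follows by writing $f=f_1-f_2$ and subtracting. For a convex $f$, the left derivative $D^-f$ is nondecreasing and left-continuous on $\RR$, and its distributional second derivative is a positive Radon measure $\mu$ satisfying $\mu([c,d))=D^-f(d)-D^-f(c)$. These structural facts are what make both sides of the formula meaningful.

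Next, I would mollify: take a smooth nonnegative kernel $\phi_n$ supported in $(-1/n,0)$ (the left-sided support matters, so that $f_n'=f'\ast\phi_n$ approximates the left-continuous $D^-f$ in the correct way) and set $f_n=f\ast\phi_n$. Each $f_n$ is smooth and convex, $f_n\to f$ and $f_n'\to D^-f$ pointwise, and the measures $f_n''(y)\,dy$ converge vaguely to $\mu$. For each smooth $f_n$, the classical It\^o formula for semimartingales with jumps yields
\[
\begin{split}
f_n(X_t) &= f_n(X_0)+\int_{0+}^t f_n'(X_{s-})\,dX_s+\tfrac{1}{2}\int_0^t f_n''(X_{s-})\,d[X^c,X^c]_s \\
&\quad+\sum_{0<s\leq t}\bigl(f_n(X_s)-f_n(X_{s-})-f_n'(X_{s-})\Delta X_s\bigr),
\end{split}
\]
and the occupation times formula rewrites the continuous quadratic variation term as $\tfrac{1}{2}\int_{-\infty}^\infty f_n''(y)L_t^y(X)\,dy$.

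Finally, I would pass $n\to\infty$ in each term. Localizing by stopping times keeping $|X|$ bounded on $[0,t]$, the pointwise convergence $f_n\to f$ and $f_n'\to D^-f$ together with uniform local bounds gives convergence of the left-hand side and, via stochastic dominated convergence, of the It\^o integral to $\int_{0+}^t D^-f(X_{s-})\,dX_s$. The jump sum converges pathwise by dominated convergence, bounding the summands by a local second-order Taylor estimate and using $\sum_{s\leq t}(\Delta X_s)^2\leq [X,X]_t<\infty$. For the local-time term, $y\mapsto L_t^y(X)$ has compact support (contained in the range of $X$ on $[0,t]$) and is bounded, and $f_n''(y)\,dy\to d\mu$ vaguely, so the integral converges to $\tfrac{1}{2}\int_{-\infty}^\infty L_t^y(X)\,d\mu(y)$.

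The main technical obstacle is the interface between $\mu$ and the local time at atoms of $\mu$: one must align the left-continuous representative $D^-f$ with the evaluation $X_{s-}$ in the stochastic integral, and use the correct (right-continuous in $y$) version of semimartingale local time, so that atoms of $\mu$ are weighted against $L_t^y$ on the correct side without producing a spurious boundary contribution. The choice of the mollifier $\phi_n$ supported in $(-1/n,0)$ is precisely what encodes this one-sided convention and ensures that all limits line up consistently with the statement.
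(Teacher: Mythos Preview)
The paper does not prove this theorem: it is quoted as a known result with a reference to \cite[Ch.~4]{Protter}, so there is no ``paper's own proof'' to compare against. Your sketch is the standard mollification argument one finds in Protter, and the overall strategy (reduce to convex $f$, left-sided mollifier, classical It\^o with jumps plus the occupation times formula, then pass to the limit term by term) is correct.

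One small caveat: in the local-time term you invoke vague convergence of $f_n''(y)\,dy$ to $\mu$ against $y\mapsto L_t^y(X)$, saying only that this function is bounded with compact support. Vague convergence of measures is tested against continuous compactly supported functions, and for a general semimartingale the map $y\mapsto L_t^y(X)$ is only c\`adl\`ag in $y$, not continuous. The way this is handled in the standard proof is either to exploit the one-sided support of $\phi_n$ together with the right-continuity of $L_t^y$ in $y$ (so that atoms of $\mu$ pick up the correct value), or to argue via monotone/dominated convergence directly rather than through vague limits. You allude to this in your last paragraph, but as written the passage to the limit in that term is not quite justified; making the one-sided limit precise there is exactly the point you identify as the ``main technical obstacle.''
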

Due to the regularity of $u$ it can be written as a difference of two convex functions (see Problem $6.24$ in \cite[Ch.~3]{Kar_Shr}). Moreover, the second derivative measure $\mu$ of $u$ can be split into two parts $\mu=\mu_c+\mu_d$, where the continuous part $\mu_c$ is given by $d\mu_c=u^{\prime\prime}\,dx$ and the discrete part $\mu_d=\delta_a$ is a point mass at $a$. Here, $u^{\prime\prime}(x)$ denotes the second derivative of $u$ at $x$ except at the points $a$ and $b$, where it denotes the right second derivative (which we know is finite). By Corollary $1$ of the Meyer-It\^o formula in \cite{Protter}, we can now write
\begin{equation}\label{paper4:localtime}
\begin{split}
\frac{1}{2}\int_{-\infty}^\infty L_t^y(U) \,d\mu(y)&=\frac{1}{2}\int_0^t u^{\prime\prime}(U_{s-}) \,d[U,U]_s^c+\frac{1}{2} L_t^a(U)\big(u^{\prime}(a+)-u^{\prime}(a-)\big) \\
&=\frac{\sigma^2}{2}\int_0^t u^{\prime\prime}(U_{s-}) \,ds+\frac{1}{2} L_t^a(U)\big(u^{\prime}(a+)-u^{\prime}(a-)\big),
\end{split}
\end{equation}  
where $[U,U]^c$ denotes the continuous part of the quadratic variation $[U,U]$.

Furthermore, by using \eqref{paper4:sde_OU} and the compensated Poisson random measure
\[ 
\tilde{N}_Z(dt,dy)=N_Z(dt,dy)-\lambda \,dt\,\varphi(y) \,dy, 
\]
where $N_Z$ denotes the jump measure associated with $Z^{\sigma,\lambda,\varphi}$,  we get 
\begin{equation}\label{paper4:dX_s}
\begin{split}
\int_{0+}^t D^-u(&U_{s-})\,dU_s+\displaystyle\sum_{0<s\leq t} \big(u(U_s)-u(U_{s-})-D^-u(U_{s-})\dpt\Delta U_s\big)\\
&=-\mu\int_0^t U_{s-}D^-u(U_{s-})\,ds+\sigma\int_0^t D^-u(U_{s-})\,dW_s \\
&\quad+\int_{0+}^t\int_\RR\big(u(U_{s-}+y)-u(U_{s-})\big) \,\tilde{N}_Z(ds,dy) \\
&\quad+\lambda\int_0^t\int_\RR\big(u(U_{s-}+y)-u(U_{s-})\big)\varphi(y)\,dy\,ds.
\end{split}
\end{equation}
Summing up, we now have for $t\geq 0$
\begin{equation}\label{paper4:ito_u}
\begin{split}
u(U_t)&=u(U_0)+\int_0^t \Big(\frac{\sigma^2}{2} u^{\prime\prime}(U_{s-})-\mu U_{s-}D^-u(U_{s-})\Big)\,ds \\
&\quad+\lambda\int_0^t\int_\RR \big(u(U_{s-}+y)-u(U_{s-})\big)\varphi(y)\,dy\,ds \\
&\quad+\frac{1}{2} L_t^a(U)\big(u^{\prime}(a+)-u^{\prime}(a-)\big)+M_t,
\end{split}
\end{equation}
where
\[
M_t=\sigma\int_0^t D^-u(U_{s-})\,dW_s+\int_{0+}^t\int_\RR\big(u(U_{s-}+y)-u(U_{s-})\big) \,\tilde{N}_Z(ds,dy).
\]
Since $u$ is Lipschitz, has a bounded left derivative and since the jumps density $\varphi$has a finite swe get that $\{M_t\}_{t \geq 0}$ is a martingale.
\begin{lemma}\label{paper4:localtime2}
Assume $a\in\RR$ and $U_0>a$. Then a.s.\ $L_{\tau_a\wedge t}^a(U)=0$ for all $t\geq 0$.
\end{lemma}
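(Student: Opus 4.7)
The plan is to reduce the claim to a short counting argument. I rely on two standard facts about the local time of a semimartingale $X$ at a fixed level $a$ (see Chapter~IV of \cite{Protter}): (i) $t\mapsto L_t^a(X)$ is a continuous, non-decreasing process, and (ii) its Stieltjes measure is carried by $\{s\ge 0 : X_{s-}=a\}$, so that $\int_0^{\infty}\mathbf{1}_{\{X_{s-}\neq a\}}\,dL_s^a(X)=0$. Granted (i) and (ii), it suffices to prove that the random set
\[
A:=\bigl\{s\in[0,\tau_a\wedge t]\,:\,U_{s-}=a\bigr\}
\]
is almost surely at most countable, since a continuous non-decreasing process cannot grow on a countable set and $L_0^a(U)=0$.

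For countability, first observe that $U_0>a$ together with the definition of $\tau_a$ gives $U_r>a$ for every $r<\tau_a$, and hence $U_{s-}\ge a$ for every $s\le\tau_a$. Now suppose $s\in A$ satisfies $s<\tau_a$. If $s$ were not a jump time of the compound Poisson process $C^{\lambda,\varphi}$, the path of $U$ would be continuous at $s$, so $U_s=U_{s-}=a$, forcing $s\ge\tau_a$, a contradiction. Therefore every element of $A\cap[0,\tau_a)$ is a jump time of $C^{\lambda,\varphi}$, and there are almost surely only countably many such times. Adjoining the single possible candidate $s=\tau_a$, $A$ is a.s.\ at most countable.

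The step I expect to require the most care is invoking (i). A priori, the Meyer-It\^o/Tanaka expansion of $|U-a|$ exhibits $L^a(U)$ only as a c\`adl\`ag non-decreasing process; one has to verify that each potential jump $\Delta L_t^a(U)$ at a jump time of $U$ is exactly absorbed by the corresponding term in the compensator sum $\sum_{0<s\le t}\bigl(|U_s-a|-|U_{s-}-a|-\mathrm{sgn}(U_{s-}-a)\,\Delta U_s\bigr)$, leaving $L^a(U)$ continuous in $t$. Once (i) and (ii) are secured, the countability of $A$ immediately delivers $L_{\tau_a\wedge t}^a(U)=0$ almost surely, with the hypothesis $U_0>a$ used only to guarantee $\tau_a>0$ and that the argument starts from the vanishing initial value $L_0^a(U)=0$.
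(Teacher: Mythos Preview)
Your argument is correct and, once you accept facts (i) and (ii) from Protter, is arguably cleaner than the paper's. Note that the paper itself simply asserts continuity of $t\mapsto L_t^a(U)$ without further comment, so your hesitation about (i) is more caution than the authors exercise; in Protter's convention the jump compensation is absorbed into the Tanaka sum by definition, leaving $L^a$ continuous.

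The route, however, is genuinely different. The paper starts from the explicit Tanaka-type identity
\[
\tfrac12 L_{\tau_a\wedge t}^a(U)=(U_{\tau_a\wedge t}-a)^- - \sum_{0<s\le\tau_a\wedge t}1_{\{U_{s-}>a\}}(U_s-a)^- + \int_{0+}^{\tau_a\wedge t}1_{\{U_{s-}\le a\}}\,dU_s - \sum_{0<s\le\tau_a\wedge t}1_{\{U_{s-}\le a\}}(U_s-a)^+,
\]
uses $U_{s-}\ge a$ on $(0,\tau_a\wedge t]$ to kill most terms, and then checks by hand that the two surviving pieces (a $dU$-integral over $\{U_{s-}=a\}$ and a sum over the same set) cancel, invoking the same observation you make: any $s<\tau_a$ with $U_{s-}=a$ must be an upward jump of the compound Poisson driver. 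You instead bypass the explicit formula entirely, replacing the term-by-term cancellation with the structural fact that a continuous non-decreasing process cannot increase on a countable set. Your version is shorter and more conceptual; the paper's version is more self-contained in that it does not invoke the support property of $dL^a$ as a separate black box. Both hinge on the identical key observation about jump times.
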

\begin{proof}
Fix $a\in\RR$ and assume $U_0>a$. Since the local time process $\{L_t^a\}_{t\geq 0}$ is continuous in $t$ it is enough to prove that for fixed $t\geq 0$ we have $L_{\tau_a\wedge t}^a(U)=0$ a.s. From \cite[p.~217]{Protter}, we get that
\[
\begin{split}
\frac{1}{2}L_{\tau_a\wedge t}^a(U)&=(U_{\tau_a\wedge t}-a)^{-}-\displaystyle\sum_{0<s\leq\tau_a\wedge t}1_{\{U_{s-}>a\}}(U_s-a)^- \\
&\quad+\int_{0+}^{\tau_a\wedge t}1_{\{U_{s-}\leq a\}}\,dU_s-\displaystyle\sum_{0<s\leq\tau_a\wedge t}1_{\{U_{s-}\leq a\}}(U_s-a)^+.
\end{split}
\]
Futhermore, from the fact that $U_{s}> a$ for all $0<s< \tau_a\wedge t$, we get that $U_{s-} \geq a$ for all $0<s< \tau_a\wedge t$ and from the left continuity of $U_{s-}$, we can conclude that we also have $U_{\tau_a\wedge t-} \geq a$. From that and by splitting the integral and the sum, we obtain
\[
\begin{split}
\frac{1}{2}L_{\tau_a\wedge t}^a(U)&=1_{\{U_{\tau_a\wedge t-}=a\}}(U_{\tau_a\wedge t}-a)^-+1_{\{U_{\tau_a\wedge t-}=a\}}(U_{\tau_a\wedge t}-a)\\
&\quad-1_{\{U_{\tau_a\wedge t-}=a\}}(U_{\tau_a\wedge t}-a)^++\int_{0+}^{\tau_a\wedge t-}1_{\{U_{s-}=a\}}\,dU_s \\
&\quad-\displaystyle\sum_{0<s<\tau_a\wedge t}1_{\{U_{s-}=a\}}(U_s-a)^+ \\
&=\int_{0+}^{\tau_a\wedge t-}1_{\{U_{s-}=a\}}\,dU_s-\displaystyle\sum_{0<s<\tau_a\wedge t}1_{\{U_{s-}=a\}}(U_s-a)^+.
\end{split}
\]
From the observation that if $U_{s-}=a$ for some $0<s<\tau_a\wedge t$, then $s$ is a jump time and the jump must be in the up direction, we conclude that the right hand side of the last expression is zero and so we are done.  
\end{proof} 

\noindent
\Remark In a similar way one can show that, if $a<U_0<b$, then 
\[
L_{\tau_a\wedge\tau_b\wedge t}^a(U)= 0\;\, \text{and}\;\, L_{\tau_a\wedge\tau_b\wedge t}^b(U)= 0 \;\,\text{for}\;\, t\geq 0. 
\]

\noindent
\begin{proofoft}{\ref{paper4:ver_thm}}
Since $u(x)=V(x)=\E_x[U_{\tau_a\wedge\tau_b}]=x$, when $x\leq a$, we can assume that $x>a$. Define $Y_t=u(U_{\tau_a\wedge t})$, $t\geq 0$. By using \eqref{paper4:ito_u}, Lemma \ref{paper4:localtime2}, the expression \eqref{paper4:gen_U} for the generator of $U$, and \eqref{paper4:FBP}, we get
\begin{equation}\label{paper4:Y}
\begin{split}
Y_t&=u(x)-\int_0^{\tau_a\wedge t} \mu U_{s-}1_{\{U_{s-}\geq b\}} \,ds \\
&\quad+\lambda\int_0^{\tau_a\wedge t}\int_\RR\big(u(U_{s-}+y)-u(U_{s-})\big)\varphi(y)1_{\{U_{s-}\geq b\}}\,dy\,ds+M_{\tau_a\wedge t}.
\end{split}
\end{equation}
Property $a)$ and the martingale property of $\{M_{\tau_a\wedge t}\}$ give that $\{Y_{t}\}_{t \geq 0}$ is a supermartingale. Furthermore, from property $b)$ we get that $Y_{t}\geq U_{\tau_a\wedge t}$, for $t\geq 0$, and since
\begin{equation}\label{paper4:ineq_U}
U_{\tau_a\wedge t}\geq a-J,\quad t\geq 0,
\end{equation} 
we can apply the optional sampling theorem (see \cite{Kar_Shr}) and obtain
\[
\E_x[U_{\tau_a\wedge\tau}]\leq \E_x[Y_{\tau}]\leq \E_x[Y_0]=u(x),
\]
where $\tau$ is an arbitrary stopping time with respect to $U$. Hence, $V(x)\leq u(x)$ for $x>a$. 
In particular, if $x\geq b$ then $x\leq V(x)\leq u(x)=x$ and so $u(x)=$ $V(x)=$ $\E_x[U_{\tau_a\wedge\tau_b}]$ when $x\geq b$.

For the case when $a<x<b$, note that from \eqref{paper4:Y} we get for $t\geq 0$ that
\[
Y_{\tau_b\wedge t}=M_{\tau_a\wedge\tau_b\wedge t}+u(x)
\]
and since 
\[
a-J\leq Y_{\tau_b\wedge t}\leq b+J,\quad t\geq 0,
\]
the optional sampling theorem applies again and we obtain $u(x)=\E_x[Y_{\tau_b}]$. Finally, the fact that $Y_{\tau_b}=U_{\tau_a\wedge\tau_b}$ gives us $u(x)=\E_x[U_{\tau_a\wedge\tau_b}]\leq V(x)$ and the proof is complete.
\end{proofoft}

\section{Numerical solution of the free boundary value
  problem}\label{paper4:numerical_FBP}

We have not been able to give a rigorous proof of the existence and
uniqueness of the solution $(u,b)$ of the free boundary value problem
\eqref{paper4:FBP}. We therefore resort to a numerical solution by
means of the finite element method. However, at the end of this
section we will show that we have strong computational evidence for
both existence and uniqueness for \eqref{paper4:FBP}.  In order to
  achieve this we first show rigorous existence and regularity results
  for the boundary value problem \eqref{paper4:BVP} and rigorous convergence estimates
  with explicit constants for the finite element approximation.
 
\subsection{The boundary value problem}  \label{paper4:subsec:1} 
We begin by transforming the free boundary value problem
\eqref{paper4:FBP} to a problem with homogeneous boundary values. Set
$v(x)=u(x)-x$ and use 
$
\int_{-\infty}^\infty y\varphi(y)\,dy=0
$
to get 
\begin{align}\label{paper4:FBP2}
\begin{aligned}
 -\tfrac12{\sigma^2} v''(x)+\mu xv'(x) \phantom{+y)-v(x)\big)\varphi(y)\,dy}& &&\\
 -\lambda \int_{-\infty}^\infty
    \big(v(x+y)-v(x)\big)\varphi(y)\,dy
        &=-\mu x,    && x\in(a,b), \\ 
   v(x) &=0,      && x\not\in (a,b),\\ 
   v'(b)&=0.  
\end{aligned}
\end{align}
Introducing the operators 
\begin{align*}
 &\mathcal{L}v(x)= -\tfrac12{\sigma^2} v''(x)+\mu xv'(x),  \\
  &\mathcal{I}v(x)= \lambda \int_{-\infty}^\infty
 \big(v(x+y)-v(x)\big)\varphi(y)\,dy, 
\end{align*}
our approach will be to first solve the boundary value problem
\begin{align}
 \label{paper4:eq:3}
 \begin{aligned}
  \mathcal{L}v 
   -\mathcal{I}v 
   &=f, \ &&  x\in(a,b), \\ 
  v(x)&=0, && x\not\in (a,b),
\end{aligned}
\end{align}
with $f(x)=-\mu x$, and then for fixed $a<0$ find $b>a$ such that
$v'(b)=0$. 

To solve \eqref{paper4:eq:3} we follow a standard approach based on a
weak formulation and Fredholm's alternative.  We denote by
$(\cdot,\cdot)$ and $\|\cdot\|$ the standard scalar product and norm
in $L_2(a,b)$, and we denote by $H^k(a,b)$ and $H^1_0(a,b)=\{v\in
H^1(a,b):v(a)=v(b)=0\}$ the standard Sobolev spaces. We denote the
derivative $Dv=dv/dx$. We choose $v\mapsto\|Dv\|$ to be the norm in
$H^1_0(a,b)$, which is equivalent to the standard $H^1$-norm.  We
extend functions $v\in L_2(a,b)$ by zero outside $(a,b)$ so that
$\mathcal{I}v$ is properly defined. We define bilinear forms
\begin{align}
  \label{paper4:eq:4}
  \begin{split}
 A_{\mathcal{L}}(u,v)
 &=\int_a^b \big(\tfrac12\sigma^2 u'(x)v'(x)+\mu xu'(x)v(x) \big)\,dx,
 \quad u,v\in H^1_0(a,b),\\
 A_{\mathcal{I}}(u,v)
 &=\int_a^b \mathcal{I}u(x)v(x)\,dx, 
\quad u,v\in L_2(a,b)\\
 A(u,v)
 &=  A_{\mathcal{L}}(u,v)-A_{\mathcal{I}}(u,v).  
\end{split}
\end{align}
Since $\int_{-\infty}^\infty \varphi(y)\,dy=1$,
$\varphi(-y)=\varphi(y)$, and $v(x)=0$ for $x\not\in(a,b)$, we also
have
\begin{align}
  \label{paper4:eq:8}
  \mathcal{I}v(x)
  = \lambda \int_{a}^b \varphi(x-y) v(y)\,dy-\lambda v(x) , 
  \quad v\in L_2(a,b). 
\end{align}


The convolution operator $\mathcal{I}_1v(x)=\int_{-\infty}^\infty
\varphi(x-y) v(y)\,dy$ is bounded in $L_2(a,b)$ with constant $
c=\int_{-\infty}^\infty\varphi(y)\,dy=1$ by Young's inequality.
Hence, 
\begin{align}
 \label{paper4:eq:9}
\|\mathcal{I}v\|& \le 2 \lambda\|v\| , \quad  v\in L_2(a,b),\\
 \label{paper4:eq:9b}
\|D\mathcal{I}v\|& \le 2 \lambda\|Dv\| , \quad  v\in H^1_0(a,b),
\end{align} 
and
\begin{align*}
  -A_{\mathcal{I}}(v,v) 
  \ge \lambda\big(\|v\|^2-\|\mathcal{I}_1v\|\|v\|\big)\ge0,  
  \quad v\in L_2(a,b).
\end{align*}
Hence, 
\begin{align*}
\begin{split}
|A(u,v)| 
&\le \tfrac12\sigma^2 \|Du\|\|Dv\|
+\mu\max(|a|,|b|) \|Du\|\|v\|
+2\lambda\|u\|\|v\| \\
&\le c_1 \|Du\|\|Dv\|, \quad  u,v\in H^1_0(a,b), \\
c_1&=\tfrac12\sigma^2+c_2(\mu\max(|a|,|b|)+2\lambda c_2), 
\end{split}
\end{align*}
where we also used Poincar\'e's inequality 
\begin{align}  \label{paper4:poincare}
 \|v\|\le c_2\|Dv\|, \quad v\in H^1_0(a,b), \quad c_2=(b-a)/\pi. 
\end{align}
By integration by parts we obtain
\begin{align*}
A_{\mathcal{L}}(v,v)
=\tfrac12\sigma^2 \|Dv\|^2-\tfrac12\mu \|v\|^2 , 
\quad  v\in H^1_0(a,b),
\end{align*}
so that $A(\cdot,\cdot)$ is bounded and coercive on $H^1_0(a,b)$:
\begin{align}
\label{paper4:eq:12b}
|A(u,v)| 
&\le c_1 \|Du\|\|Dv\|, 
&  u,v\in H^1_0(a,b) , \\ 
\label{paper4:eq:12}
A(v,v)
&\ge\tfrac12{\sigma^2} \|Dv\|^2-\tfrac12{\mu} \|v\|^2 , 
& v\in H^1_0(a,b). 
\end{align}

We say that $v\in H^1_0(a,b) $ is a weak solution of
\eqref{paper4:eq:3} if
\begin{align}
 \label{paper4:eq:5}
  A(v,\phi)=(f,\phi) \quad \forall \phi\in H^1_0(a,b).  
\end{align}  
We also use the adjoint
problem: find $w\in H^1_0(a,b)$ such that
\begin{align}
 \label{paper4:eq:5a}
  A(\phi,w)=(\phi,g) \quad \forall \phi\in H^1_0(a,b).  
\end{align}
The strong form is (note that $\mathcal{I}$ is self-adjoint in $L_2(a,b)$)  
\begin{align}  \label{paper4:eq:5bb}
 \begin{aligned}
   \mathcal{L}^*w(x)-\mathcal{I}w(x)
    &=g(x), \ &&  x\in(a,b), \\ 
   w(x)&=0, && x\not\in (a,b), 
 \end{aligned}
\end{align}
where 
\[
\mathcal{L}^*w(x)=-\tfrac12{\sigma^2} w''(x)-\mu xw'(x) -\mu w(x).
\]

We may now prove the existence and uniqueness of a classical solution
of \eqref{paper4:eq:3}.  In principle this follows from the general
theory in \cite{Garroni_Menaldi}, but we present a self-contained
proof, with explicit constants, for the simpler situation that we
consider here.  The theorem also provides results necessary for the
analysis of the finite element method.

\begin{theorem} \label{paper4:thm:existence} The boundary value
  problem \eqref{paper4:eq:3} has a unique weak solution $v\in
  H^1_0(a,b) $ for every $f\in L_2(a,b)$. The solution belongs to
  $H^2(a,b)$ and there is a constant $c_3$ such that
  \begin{align}
    \label{paper4:eq:26}
    \|D^2v\|\le c_3\|f\|.  
  \end{align}
  Moreover, if $f(x)=-\mu x$, then the solution is classical, $v\in
  C^2([a,b])$.  Similarly, the adjoint problem \eqref{paper4:eq:5bb}
  has a unique weak solution $w\in H^1_0(a,b)$ for each $g\in L_2(a,b)$,
  which belongs to $H^2(a,b)$ and
\begin{align}
 \label{paper4:eq:6b}
 \|D^2w\|\le c_3 \|g\|.  
\end{align} 
\end{theorem}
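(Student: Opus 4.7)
The plan is to recast the weak problem \eqref{paper4:eq:5} as an operator equation of Fredholm type. The G{\aa}rding inequality \eqref{paper4:eq:12} shows that the shifted bilinear form $a_\kappa(u,v):=A(u,v)+\kappa(u,v)$ is bounded and coercive on $H^1_0(a,b)$ for $\kappa:=\mu/2$, since $a_\kappa(v,v)\ge \tfrac12\sigma^2\|Dv\|^2$. Lax--Milgram then provides a bounded solution operator $S\colon L_2(a,b)\to H^1_0(a,b)$ via $a_\kappa(Sf,\phi)=(f,\phi)$ for every $\phi\in H^1_0(a,b)$. Composing with the Rellich compact embedding $H^1_0(a,b)\subset L_2(a,b)$ produces a compact operator $K:=\kappa S$ on $L_2(a,b)$, and \eqref{paper4:eq:5} is equivalent to the operator equation $(I-K)v=Sf$ in $L_2(a,b)$, to which I apply the Fredholm alternative.

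The hard step is to verify $\ker(I-K)=\{0\}$: that $A(v,\phi)=0$ for all $\phi\in H^1_0(a,b)$ forces $v\equiv 0$. I would first bootstrap regularity. Since $\mathcal{I}v\in L_2(a,b)$ by \eqref{paper4:eq:9}, the pointwise ODE $-\tfrac{\sigma^2}{2}v''+\mu xv'=\mathcal{I}v$ yields $v\in H^2(a,b)\cap H^1_0(a,b)$ and $v\in C^1([a,b])$ by Sobolev embedding; combined with the continuity of $\mathcal{I}v$ coming from \eqref{paper4:eq:8} and the continuity of $\varphi$, a further rearrangement of the ODE upgrades $v$ to $C^2([a,b])$. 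Extended by zero outside $(a,b)$, $v$ is a classical solution of $\mathcal{G}_U v=0$ on $(a,b)$ with $v\equiv 0$ on $\RR\setminus(a,b)$, but with possible jumps of $v'$ at the two endpoints. The central technical difficulty is applying a semimartingale calculus to such $v$; I would invoke the Meyer--It\^o formula (Theorem~\ref{paper4:meyer_ito}) at $U_{\tau\wedge t}$ with $\tau:=\tau_a\wedge\tau_b$, use Lemma~\ref{paper4:localtime2} together with its remark to annihilate the local-time contributions at both $a$ and $b$, and exploit $\mathcal{G}_U v\equiv 0$ on $(a,b)$ to reduce the identity to $v(x)=\E_x[v(U_{\tau\wedge t})]$. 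Since $\tau<\infty$ almost surely by mean reversion of $U$ and since $v(U_\tau)=0$ whether $U_\tau\in\{a,b\}$ or $U_\tau$ lands outside $(a,b)$ after a jump, bounded convergence gives $v(x)=0$.

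With $\ker(I-K)=\{0\}$, the Fredholm alternative supplies a bounded inverse $(I-K)^{-1}$ on $L_2(a,b)$, and hence existence of a unique weak solution $v\in H^1_0(a,b)$ for every $f\in L_2(a,b)$. The $H^2$ bound \eqref{paper4:eq:26} then follows by chaining the $L_2$ a priori estimate $\|v\|\le C\|f\|$ delivered by $(I-K)^{-1}$, the energy identity $\tfrac12\sigma^2\|Dv\|^2\le a_\kappa(v,v)=(f,v)+\kappa\|v\|^2$, and the rearranged equation $v''=\tfrac{2}{\sigma^2}(\mu xv'-\mathcal{I}v-f)$, using \eqref{paper4:eq:9} and \eqref{paper4:poincare} to extract an explicit constant $c_3$. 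For $f(x)=-\mu x$ the regularity bootstrap applies verbatim and yields $v\in C^2([a,b])$, i.e., a classical solution. The adjoint problem \eqref{paper4:eq:5bb} has exactly the same structure: the additional zeroth-order term $-\mu w$ only modifies the coercivity shift, and uniqueness follows from Fredholm duality $\dim\ker(I-K)=\dim\ker(I-K^*)$ applied to the $L_2$-adjoint of $K$; existence and the estimate \eqref{paper4:eq:6b} are then obtained exactly as in the primal case.
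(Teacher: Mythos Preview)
Your proposal is correct and shares the paper's Fredholm scaffolding (shift by $\kappa=\mu/2$, Lax--Milgram for the coercive form, compactness of $K=\kappa S$), but diverges at the two substantive steps. For uniqueness of the homogeneous problem the paper stays on the PDE side: after the same regularity bootstrap it invokes the maximum principle for the integro-differential operator (\cite[Theorem~3.1.3]{Garroni_Menaldi}), observing that $-\mathcal{I}u(x_0)\ge0$ at an interior maximum. You instead run the stochastic representation, applying the Meyer--It\^o formula and the local-time Lemma~\ref{paper4:localtime2} to obtain $v(x)=\E_x[v(U_{\tau_a\wedge\tau_b\wedge t})]$ and then let $t\to\infty$. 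This is a perfectly valid alternative and has the pleasant feature of recycling the machinery from Section~\ref{paper4:proof_ver_thm}; just note that you are quoting $\tau_a\wedge\tau_b<\infty$ a.s.\ without proof (it follows easily from the nondegenerate diffusion part, but say so). For the a~priori bound \eqref{paper4:eq:26} the paper works harder than you do: it builds an explicit barrier $\phi(x)=e^{\gamma(b-a)}-e^{\gamma(x-a)}$ to get a computable $L_\infty$ constant, then passes by duality through $H^{-1}$ to an $L_2$ estimate with a concrete $c_6$, and only then reads off $\|D^2v\|$ from the equation. Your route through the operator norm $\|(I-K)^{-1}\|$ is shorter and entirely sufficient for the theorem \emph{as stated}, but your phrase ``extract an explicit constant $c_3$'' is an overclaim: the Fredholm alternative gives boundedness of $(I-K)^{-1}$ abstractly, not a formula for its norm. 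The paper cares about this because the explicit $c_3$ feeds directly into the computable $h_0$ and $c_{11}$ used in Subsection~\ref{paper4:subsec:3}; your argument proves the theorem but would not support those later quantitative conclusions.
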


\begin{proof} The proof is a standard argument as presented, for
  example, in \cite[Ch.~6]{Evans} for elliptic PDEs.  The only
  difference is that that the lowest order term in $A(\cdot,\cdot)$ is
  defined by means of an integral operator, but the crucial
  properties \eqref{paper4:eq:12b}, \eqref{paper4:eq:12} are the
  same.  

  We first show that weak solutions are regular.  We use a regularity
  result for elliptic problems (see \cite[p.~323]{Evans}): If $v$ is a
  weak solution of
\begin{align*}
 \mathcal{L}v(x) =g(x), \ x\in(a,b);  \quad  v(a)=v(b)=0,   
\end{align*}
and if $g\in H^k(a,b)$ for some $k\geq 0$, then $v\in H^{k+2}(a,b)$.
A weak solution $v\in H^1_0(a,b) $ of \eqref{paper4:eq:3} satisfies
this with $g=f+\mathcal{I}v$, where by \eqref{paper4:eq:9}, \eqref{paper4:eq:9b}
$\mathcal{I}v\in H^1(a,b)$.  For $f\in L_2(a,b)$ we conclude that
$v\in H^2(a,b)$.  If $f\in H^1(a,b)$, then we have $v\in H^3(a,b)$ and
by Sobolev's inbedding $v\in C^2([a,b])$. In particular, a weak
solution is classical when $f(x)=0$ and $f(x)=-\mu x$.  Analogous
regularity results hold for the adjoint problem.

Now we can prove existence.  Let
\begin{align*}
   A_{\mu}(u,v) = A(u,v) +\tfrac12\mu (u,v).  
\end{align*}
By the Lax-Milgram lemma we know that the shifted problem 
\begin{align*}
  A_{\mu}(u,\phi) = (g,\phi) 
  \quad\forall \phi\in H^1_0(a,b), 
\end{align*} 
has a unique solution $u\in H^1_0(a,b)$ for each $g\in L_2(a,b)$.  This
defines the bounded linear operator
$\mathcal{A}_{\mu}^{-1}:L_2(a,b)\to H^1_0(a,b)$ by
$u=\mathcal{A}_{\mu}^{-1}g$.  The equation \eqref{paper4:eq:5} is now
equivalent to
\begin{align*}
  v=\mathcal{A}_{\mu}^{-1}f+\tfrac12\mu\mathcal{A}_{\mu}^{-1}v, 
\end{align*}
or $v-Kv=h$, where $h=\mathcal{A}_{\mu}^{-1}f$ and where
$K=\tfrac12\mu\mathcal{A}_{\mu}^{-1}:L_2(a,b) \to L_2(a,b)$ is a
compact operator, because $H_0^1(a,b)$ is compactly inbedded in
$L_2(a,b)$.

By the Fredholm alternative we know that the latter equation is
uniquely solvable for every $h\in L_2(a,b)$ if and only if the
corresponding homogeneous equation has no non-trivial solution.  But a
non-trivial solution of $v-Kv=0$ would be a weak solution, and hence a
classical solution, of \eqref{paper4:eq:3} with $f=0$.

Then we can apply the maximum principle for classical solutions of
\eqref{paper4:eq:3}, see \cite[Theorem 3.1.3]{Garroni_Menaldi}.  It
says that if a classical function satisfies
$(\mathcal{L}-\mathcal{I})u\le0$ in $(a,b)$, then
$\max_{[a,b]}u=\max_{\mathbb{R}\setminus(a,b)}u$.  (The maximum
principle for the integro-differential equation is proved in the same
way as for the differential equation after noting that
$-\mathcal{I}u(x_0)\ge0$ if $u$ has a maximum at $x_0$.) We conclude
that that the homogeneous equation has no non-trivial solution and
therefore \eqref{paper4:eq:3} has a unique weak solution for every
$f\in L_2(a,b)$.  By the Fredholm theory the adjoint problem
\eqref{paper4:eq:5bb} is then also uniquely solvable for all $g\in
L_2(a,b)$.

Finally, we prove the bounds \eqref{paper4:eq:26} and
\eqref{paper4:eq:6b}.  Let $v=\mathcal{A}^{-1}f$ and
$w=(\mathcal{A}^*)^{-1}g$ denote the solution operators of
\eqref{paper4:eq:3} and \eqref{paper4:eq:5bb}, respectively.

Let $f\in H^1_0(a,b)$.  Then $v=\mathcal{A}^{-1}f$ is classical and
the maximum principle gives 
\begin{align}\label{paper4:c_4}
  \|v\|_{L_\infty(a,b)} \le c_4 \|f\|_{L_\infty(a,b)}. 
\end{align}
In order to compute the explicit constant we briefly recall the proof.  Let
\begin{align*}
\phi(x)=
\begin{cases}
e^{\gamma(b-a)}-e^{\gamma (x-a)},     &x\le b, \\
0,     &x\ge b,
\end{cases}
\end{align*}
where $\gamma>0$ is
chosen so that that $\mathcal{A}\phi\ge1$ in $(a,b)$. Then
$u(x)=\|f\|_{L_\infty(a,b)}\phi(x)$ satisfies $\mathcal{A}u\ge
\|f\|_{L_\infty(a,b)}\ge f=\mathcal{A}v $ in $(a,b)$ and $u\ge0=v$ outside
$(a,b)$, so that the maximum principle gives
$\max_{[a,b]}(v-u)=\max_{\mathbb{R}\setminus(a,b)}(v-u)=0$, that is, 
$u\ge v$ in $[a,b]$. Hence $v\le
\|\phi\|_{L_\infty(a,b)}\|f\|_{L_\infty(a,b)}$ in $[a,b]$.  The lower bound $v\ge
-\|\phi\|_{L_\infty(a,b)}\|f\|_{L_\infty(a,b)}$ is obtained in a similar way and so we get
\[
\|v\|_{L_\infty(a,b)} \leq \|\phi\|_{L_\infty(a,b)}\|f\|_{L_\infty(a,b)}\leq e^{\gamma (b-a)} \|f\|_{L_\infty(a,b)}.
\]
To determine $\gamma$, let $x\in (a,b)$ and compute
\[
\begin{split}
-\mathcal{I}\phi(x)&=\lambda e^{\gamma(x-a)}\int_{-\infty}^{b-x} (e^{\gamma y}-1)\varphi(y)\,dy \\
&\quad+\lambda (e^{\gamma(b-a)}-e^{\gamma (x-a)}) \int_{b-x}^\infty \varphi(y)\,dy \\
&\geq 
-\lambda e^{\gamma (x-a)} \int_{-\infty}^\infty \varphi(y)\,dy 
=-\lambda e^{\gamma (x-a)}. 
\end{split}
\]
Hence,
\begin{align*}
  \mathcal{A}\phi(x)
  \ge(\tfrac12 \sigma^2\gamma^2 
  -\mu b \gamma -\lambda) e^{\gamma
  (x-a)}\ge 1 , \quad x\in(a,b), 
\end{align*}
if $\tfrac12 \sigma^2\gamma^2 -\mu b \gamma -\lambda\ge 1$, that
is, if
\[
\gamma=\hat\gamma=\frac{\mu b}{\sigma^2}+\sqrt\frac{2(\lambda+1)}{\sigma^2} .
\]
Then we conclude that \eqref{paper4:c_4} holds with
$c_4=e^{\hat\gamma(b-a)}$. 

Hence, since $\|v\|\le (b-a)^{\frac12}\|v\|_{L_\infty(a,b)} $ and
$\|f\|_{L_\infty(a,b)} \le (b-a)^{\frac12}\|Df\|$, we obtain the bound
\begin{align*}
 \|v\|=\| \mathcal{A}^{-1}f\|\le c_5 \|Df\|  
  \quad \forall f\in H^1_0(a,b), \ c_5=(b-a)c_4.
\end{align*} 
By duality we conclude 
\begin{align*}
 \|(\mathcal{A}^{-1})^*\|_{B(L_2,H^{-1})} =
 \|\mathcal{A}^{-1}\|_{B(H^1_0,L_2)} \le c_5.  
\end{align*}
Hence
\begin{align}\label{paper4:ett}
 \|w\|_{H^{-1}}= \| (\mathcal{A}^*)^{-1}g\|_{H^{-1}} 
  =\|(\mathcal{A}^{-1})^*g\|_{H^{-1}}  \le c_5 \|g\| 
  \quad \forall g\in L_2(a,b), 
\end{align}
where $H^{-1}(a,b)=(H^1_0(a,b))^*$ and 
\begin{align*}  
  \|w\|_{H^{-1}} =\sup_{\phi\in H^1_0}\frac{(\phi,w)}{\|D\phi\|}.
\end{align*}
Recall that $v\mapsto\|Dv\|$ is the chosen norm in $H^1_0(a,b)$.
By using $\phi=w\in H^1_0(a,b)$ here we obtain
\begin{align}
  \label{paper4:tva}
    \|w\|^2\le \|w\|_{H^{-1}} \|Dw\|. 
\end{align}
We take $\phi=w$ in the adjoint equation \eqref{paper4:eq:5a} and use
coercivity \eqref{paper4:eq:12}, the inequality $2ab\le \epsilon
a^2+\epsilon^{-1}b^2$, and \eqref{paper4:tva} to get  
\begin{align*}
 \tfrac12\sigma^2 \|Dw\|^2  
&\le A(w,w)+\tfrac12\mu\|w\|^2 
\le  \|g\|\|w\| + \tfrac12\mu\|w\|^2 
\\&
\le \tfrac12 \mu^{-1}\|g\|^2 + \mu\|w\|^2 
\le  \tfrac12 \mu^{-1} \|g\|^2 + \mu\|w\|_{H^{-1}}\|Dw\|
\\&
\le  \tfrac12 \mu^{-1} \|g\|^2 
+ \mu^2\sigma^{-2} \|w\|_{H^{-1}}^2 +\tfrac14\sigma^2\|Dw\|^2 . 
\end{align*}
With \eqref{paper4:ett} this leads to
\begin{align*}
 \|Dw\|^2
&\le  2\sigma^{-2}\mu^{-1} \|g\|^2 
  +  4\sigma^{-4}\mu^{-2} \|w\|_{H^{-1}}^2 \\
&\le ( 2\sigma^{-2}\mu^{-1}+4\sigma^{-4}\mu^{-2}c_5^2 ) \|g\|^2 
\end{align*} 
and with Poincar\'e's inequality  \eqref{paper4:poincare}, 
\begin{align*}
 \|w\|\le c_2 \|Dw\|
  \le c_2 ( 2\sigma^{-2}\mu^{-1}+4\sigma^{-4}\mu^{-2}c_5^2)^{\frac12}\|g\| .
\end{align*} 
Hence 
\begin{align}  \label{paper4:tre}
  \begin{split}
\| (\mathcal{A}^*)^{-1}g\| &=\|w\|\le c_6 \|g\|  
\quad\forall g\in L_2(a,b), \\ 
c_6&=c_2 ( 2\sigma^{-2}\mu^{-1}+4\sigma^{-4}\mu^{-2}c_5^2)^{\frac12}.  
\end{split}
\end{align}
By duality in $L_2$ we also have
\begin{align}\label{paper4:fyra}
  \|v\| = \| \mathcal{A}^{-1}f\| \le c_6 \|f\| 
  \quad\forall f\in L_2(a,b).
\end{align}
In order to bound $D^2v$ we recall that $v\in H^2(a,b)$. Hence it satisfies
\eqref{paper4:eq:3} strongly, so that with \eqref{paper4:eq:9} we obtain
\begin{align*}
  \tfrac12{\sigma^2} \|D^2v\| 
  &\le \mu \|xDv\| + \|\mathcal{I}v\|+\|f\| \\
  &\le \mu \max(|a|,|b|)\|Dv\| + 2\lambda\|v\|+\|f\| \\
  &\le \mu \max(|a|,|b|)\|D^2v\|^{\frac12}\|v\|^{\frac12} + 2\lambda\|v\|+\|f\| \\
  &\le \tfrac14\sigma^2\|D^2v\| 
  + (2\lambda+\sigma^{-2} \mu^2 \max(|a|,|b|)^2)\|v\|+\|f\|.
\end{align*}
Hence, 
\begin{align*}
  \|D^2v\| &\le c_7 \|f\| + c_8\|v\|, \\
   c_7&=4\sigma^{-2}, \ c_8=4\sigma^{-2} (2\lambda+\mu+\sigma^{-2} \mu^2 \max(|a|,|b|)^2).  
\end{align*}
In the last step we replaced $2\lambda$ by $2\lambda+\mu$ in $c_8$, so
that the same result holds also for the adjoint equation
\eqref{paper4:eq:5bb}.  Using also \eqref{paper4:tre} and
\eqref{paper4:fyra} we finally conclude
\begin{align*}
   \|D^2v\|&\le c_3 \|f\|, \quad 
   \|D^2w\|\le c_3 \|g\|, \\ 
  c_3&=c_7+c_6c_8.   
\end{align*}
This completes the proof.  
\end{proof}

\subsection{The finite element method}  \label{paper4:subsec:2} 
The finite element method is based on a family of subdivisions
$\mathcal{T}_h$ of the interval $[a,b]$ parametrized by the maximal
mesh size $h$. Each mesh is of the form 
\begin{align*}
  \mathcal{T}_h:  a=x_0<x_1<\dots<x_{j-1}<x_{j}<\dots<x_N=b,  \quad
  h=\max_{j=1,\dots,N} (x_{j}-x_{j-1}).  
\end{align*}
We introduce the space $V_h\subset H^1_0(a,b)$
consisting of all continuous functions that reduce to piecewise
polynomials of degree $\le1$ with respect to $\mathcal{T}_h$.  See
\cite[Ch.~5]{LarssonThomee} or \cite[Ch.~1]{BrennerScott}.  Then there
is an interpolator $I_h:C([a,b])\to V_h$ such that $I_hu(x_j)=u(x_j)$,
$j=1,\dots,N$, and 
\begin{align}
  \label{paper4:eq:12a}
  \|D(u-I_hu)\|_{L_p(a,b)}\le h^{\frac12+\frac1p}\|D^2u\|, 
  \quad u\in H^2(a,b)\cap H^1_0(a,b), \ p=2,\infty.
\end{align}
To prove this we use the identity
\begin{align*}
D(u-I_hu)(x)
=h_{j}^{-1}\int_{x_{j-1}}^{x_{j}} \big(u'(x)-u'(y)\big)\,dy
=h_j^{-1}\int_{x_{j-1}}^{x_{j}} \int_y^xu''(z)\,dz\,dy,
\end{align*}
for $x\in (x_{j-1},x_{j})$ and with $h_j=x_{j}-x_{j-1}$, which yields
\begin{align*}
   |D(u-I_hu)(x)|\le h_{j}^{\frac12}\|D^2u\|_{L_2(x_{j-1},x_{j})} 
   \le h^{\frac12}\|D^2u\|, 
   \quad x\in (x_{j-1},x_{j}).  
\end{align*}
This proves the case $p=\infty$ and for $p=2$ we have 
\begin{align*}
   \|D(u-I_hu)\|^2
   \le\sum_{j=1}^{N}h_{j}^2\|D^2u\|_{L_2(x_{j-1},x_{j})}^2
    \le h^2  \|D^2u\|^2.
\end{align*}

The finite element problem is based on the weak formulation in
\eqref{paper4:eq:5}: find $v_h\in V_h$ such that
\begin{align}
\label{paper4:eq:5b}
 A(v_h,\phi_h)=(f,\phi_h) \quad \forall \phi_h\in V_h,  
\end{align} 
where $A(\cdot,\cdot)$ is defined in \eqref{paper4:eq:4} with the
integral operator computed as in \eqref{paper4:eq:8}. In the following
theorem we prove convergence estimates with explicit constants. 

\begin{theorem} \label{paper4:thm:numerical} Let $v$ be the solution
  of \eqref{paper4:eq:3} as in Theorem
  \ref{paper4:thm:existence}. There is
  $h_0=\sigma/(2^{\frac12}\mu^{\frac12}c_1c_3)$ such that, for $h\le
  h_0$, \eqref{paper4:eq:5b} has a unique solution $v_h\in V_h$ and
  \begin{align}
    \label{paper4:eq:11}
    \|v-v_h\|\le 4c_1^2c_3^2\sigma^{-2} h^2\|f\|, 
  \quad  \|D(v-v_h)\|\le 4c_1c_3\sigma^{-2} h\|f\|.  
  \end{align}
\end{theorem}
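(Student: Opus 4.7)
The bilinear form $A(\cdot,\cdot)$ is bounded by \eqref{paper4:eq:12b} but only G\aa{}rding-coercive via \eqref{paper4:eq:12}, so C\'ea's lemma does not apply directly. The plan is to proceed by an Aubin--Nitsche duality argument in the Schatz style, which ties the $H^1$ and $L_2$ errors together and is exactly what forces the smallness condition on $h$.

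Assuming $v_h$ exists and writing $e=v-v_h$, Galerkin orthogonality $A(e,\phi_h)=0$ for all $\phi_h\in V_h$ combined with \eqref{paper4:eq:12}, the continuity of $A$, the interpolation estimate \eqref{paper4:eq:12a} with $p=2$, and the regularity \eqref{paper4:eq:26} should yield
\[
  \tfrac12\sigma^2\|De\|^2
  \le A(e,v-I_hv)+\tfrac12\mu\|e\|^2
  \le c_1c_3 h\|De\|\,\|f\|+\tfrac12\mu\|e\|^2.
\]
The $L_2$-term on the right is then controlled by duality: let $w\in H^1_0(a,b)$ solve the adjoint problem $A(\phi,w)=(\phi,e)$ for all $\phi\in H^1_0(a,b)$, which is uniquely solvable by Theorem \ref{paper4:thm:existence} and satisfies $\|D^2w\|\le c_3\|e\|$. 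Galerkin orthogonality then gives $\|e\|^2=A(e,w-I_hw)\le c_1c_3 h\|De\|\,\|e\|$, i.e.\ $\|e\|\le c_1c_3 h\|De\|$.

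Substituting this bound into the previous display and absorbing the resulting $\tfrac12\mu\,c_1^2 c_3^2 h^2\|De\|^2$ into the left-hand side is legitimate exactly when $\tfrac12\mu\,c_1^2 c_3^2 h^2\le\tfrac14\sigma^2$, i.e.\ $h\le h_0=\sigma/(\sqrt{2\mu}\,c_1c_3)$; one then reads off $\|De\|\le 4c_1c_3\sigma^{-2}h\|f\|$, and a final application of $\|e\|\le c_1c_3 h\|De\|$ delivers the $L_2$ estimate. Existence and uniqueness of $v_h$ are handled simultaneously: in finite dimensions uniqueness suffices, and running the same chain of inequalities with $f=0$ (taking the adjoint right-hand side to be the candidate null solution itself) shows that the only $v_h\in V_h$ with $A(v_h,\phi_h)=0$ for all $\phi_h\in V_h$ is zero, again under $h\le h_0$. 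The main obstacle is the absence of genuine coercivity of $A$: it is precisely the duality step that closes the argument, and the kick-back from that step both dictates and quantifies the threshold $h_0$.
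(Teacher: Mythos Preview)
Your proposal is correct and follows essentially the same Schatz-type argument as the paper: Galerkin orthogonality, the Aubin--Nitsche duality estimate $\|e\|\le c_1c_3h\|De\|$ via the adjoint problem and \eqref{paper4:eq:6b}, and the G\aa{}rding kick-back under $h\le h_0$. The only cosmetic difference is ordering: the paper first uses $A(e,e)=A(e,v)$ to obtain $\|De\|\le 4c_1\sigma^{-2}\|Dv\|$ and deduces uniqueness from $f=0\Rightarrow v=0$, and then repeats the computation with $A(e,e)=A(e,v-I_hv)$ to get the quantitative bound, whereas you go straight to $A(e,v-I_hv)$ and treat uniqueness as the special case $f=0$ of the same inequality.
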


\begin{proof}  We adapt an argument from \cite{Schatz}.
  Let $e=v-v_h$ denote the error.  By subtraction of \eqref{paper4:eq:5b} and
  \eqref{paper4:eq:5} with $\phi=\phi_h\in V_h\subset H^1_0(a,b)$ we get 
\begin{align}
\label{paper4:eq:5c}
 A(e,\phi_h)=0 \quad \forall \phi_h\in V_h.  
\end{align} 
Consider the adjoint problem \eqref{paper4:eq:5a} with $g=e$ and
solution $w=(\mathcal{A}^*)^{-1}e$.  With $\phi=e$ this yields
\begin{align*}
  \begin{split}
    \|e\|^2&=A(e,w)=A(e,w-I_hw) \le c_1\|De\|\|D(w-I_hw)\| \\
    &\le c_1\|De\| h\|D^2w\|
    \le c_1c_3h\|De\| \|e\|.  
 \end{split}
\end{align*}
Here we used \eqref{paper4:eq:5c}, \eqref{paper4:eq:12b},
\eqref{paper4:eq:12a}, and \eqref{paper4:eq:6b}.  
We conclude
\begin{align}
  \label{paper4:eq:13}
      \|e\|\le c_1c_3h\|De\|.
\end{align}

In view of \eqref{paper4:eq:5c} we have $A(e,e)=A(e,v-v_h)=A(e,v)$, so that
by \eqref{paper4:eq:12} and \eqref{paper4:eq:13}, 
\begin{align}
 \label{paper4:eq:14}
 \begin{split}
     \tfrac12{\sigma^2} \|De\|^2  
  &\le  A(e,e) +\tfrac12{\mu} \|e\|^2 
   =  A(e,v) +\tfrac12{\mu} \|e\|^2 \\
  &\le c_1\|De\|\|Dv\| +\tfrac12{\mu}c_1^2c_3^2h^2 \|De\|^2 .  
\end{split}
\end{align}
Hence, for $h\le h_0$ sufficiently small ($h_0^2=\sigma^2/(2\mu
c_1^2c_3^2)$), we have
\begin{align*}
  \|De\|\le c_9\|Dv\|,  \quad c_9=4c_1\sigma^{-2}.  
\end{align*}
Now if $f=0$ in \eqref{paper4:eq:5} and \eqref{paper4:eq:5b}, then
$v=0$ by uniqueness, and hence $e=0$, so that $v_h=0$.  This means that
we have uniqueness for the finite element problem
\eqref{paper4:eq:5b}.  But this is an equation in a finite dimensional
space so existence also follows.  Therefore, \eqref{paper4:eq:5b} has
a unique solution for all $f\in L_2(a,b)$ if $h\le h_0$.

In order to prove the error estimate \eqref{paper4:eq:11} we return to
\eqref{paper4:eq:14} but use $A(e,e)=A(e,v-v_h)=A(e,v-I_hv)$ instead:
\begin{align*}
 \begin{split}
     \tfrac12{\sigma^2} \|De\|^2  
  &\le  A(e,e) +\tfrac12{\mu} \|e\|^2
  =  A(e,v-I_hv) +\tfrac12{\mu} \|e\|^2 \\
  &\le c_1\|De\|\|D(v-I_hv)\| +\tfrac12{\mu}c_1^2c_3^2h^2 \|De\|^2,
\end{split}
\end{align*}
and conclude, for $h\le h_0$, 
\begin{align*}
   \|De\|\le c_9\|D(v-I_hv)\|, \quad c_9=4c_1\sigma^{-2}.  
\end{align*}
Hence,  by \eqref{paper4:eq:12a}, \eqref{paper4:eq:26}, and \eqref{paper4:eq:13}, 
\begin{align*}
   \|De\|&\le c_9h \|D^2v\|\le c_9c_3h \|f\|= 4c_1c_3\sigma^{-2}h \|f\|,  \\
   \|e\|&\le c_1c_3h \|De\|\le 4c_1^2c_3^2\sigma^{-2}h^2 \|f\|,
\end{align*}
which is \eqref{paper4:eq:11}.  
\end{proof}

We finish by proving the pointwise convergence of the derivative.

\begin{corollary}  \label{paper4:coro1}
Assume that each finite element mesh $\mathcal{T}_h$ is uniform, that
is, $x_j-x_{j-1}=h$ for $j=1,\dots,N$.  Then, for $h\le h_0$ as in
Theorem \ref{paper4:thm:numerical}, we have 
\begin{align*}
 |v'(b)-v_h'(b)| \le c_{10}h^{\frac12}\|f\| , 
\quad c_{10}=2+4c_1c_3\sigma^{-2}.
\end{align*}
\end{corollary}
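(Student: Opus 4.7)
The natural plan is to insert the nodal interpolant $I_h v$ and split
\[
v'(b)-v_h'(b)=\bigl(v'(b)-(I_hv)'(b^-)\bigr)+\bigl((I_hv)'(b^-)-v_h'(b^-)\bigr),
\]
then bound the two pieces using the tools already in hand: the pointwise interpolation estimate \eqref{paper4:eq:12a}, the regularity bound \eqref{paper4:eq:26}, and the energy estimate \eqref{paper4:eq:11} of Theorem \ref{paper4:thm:numerical}. Since $v\in H^2(a,b)\subset C^1([a,b])$ by Theorem \ref{paper4:thm:existence}, the trace $v'(b)$ is well-defined, and $(I_hv)'$ is a (piecewise) constant on each element, so $(I_hv)'(b^-)$ makes sense as the slope on the last element.

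For the first piece I would invoke \eqref{paper4:eq:12a} with $p=\infty$ on the last element and then apply \eqref{paper4:eq:26}:
\[
\bigl|v'(b)-(I_hv)'(b^-)\bigr|\le \|D(v-I_hv)\|_{L_\infty(a,b)}\le h^{1/2}\|D^2v\|\le c_3 h^{1/2}\|f\|.
\]
For the second piece, the key structural observation is that on the final element $[x_{N-1},b]$ both $I_hv$ and $v_h$ are affine and vanish at $b$ (because both belong to the $H_0^1$-conforming space $V_h$). Consequently $I_hv-v_h$ is linear on $[x_{N-1},b]$, vanishes at $b$, and takes the value $e(x_{N-1})$ at $x_{N-1}$ with $e=v-v_h$; its derivative on the element is therefore the constant $-e(x_{N-1})/h$, giving
\[
\bigl|(I_hv)'(b^-)-v_h'(b^-)\bigr|=\frac{|e(x_{N-1})|}{h}.
\]

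To bound $|e(x_{N-1})|$ I would use $e(b)=0$ and Cauchy--Schwarz:
\[
|e(x_{N-1})|=\left|\int_{x_{N-1}}^{b} e'(y)\,dy\right|\le h^{1/2}\|De\|_{L_2(x_{N-1},b)}\le h^{1/2}\|De\|.
\]
Combined with $\|De\|\le 4c_1c_3\sigma^{-2}h\|f\|$ from Theorem \ref{paper4:thm:numerical}, this yields
\[
\frac{|e(x_{N-1})|}{h}\le 4c_1c_3\sigma^{-2}\,h^{1/2}\|f\|.
\]
Adding the two bounds produces the desired estimate, with the constant arising from the sum of the interpolation constant ($c_3$, which should be the origin of the leading term written as $2$ in $c_{10}$) and the energy-error contribution $4c_1c_3\sigma^{-2}$.

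The main obstacle is essentially bookkeeping: verifying that the slope of the FE error on the last element is governed by $e(x_{N-1})$ alone (which is special to piecewise linear $V_h$ with a homogeneous Dirichlet node at $b$) and that uniformity of the mesh is only needed so that the element length is exactly $h$. Everything else is a direct citation of \eqref{paper4:eq:12a}, \eqref{paper4:eq:26}, and \eqref{paper4:eq:11}; no new analytic ingredient is required, and no finer regularity than $v\in H^2(a,b)$ is used.
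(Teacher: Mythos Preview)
Your argument is correct. Both your proof and the paper's hinge on the same decomposition through $I_hv$ and the same three ingredients \eqref{paper4:eq:12a}, \eqref{paper4:eq:26}, \eqref{paper4:eq:11}, but the treatment of the discrete piece $(I_hv)'(b^-)-v_h'(b^-)$ differs. The paper establishes a global inverse inequality $\|D\phi_h\|_{L_\infty}\le h^{-1/2}\|D\phi_h\|$ for $\phi_h\in V_h$ on a uniform mesh, applies it to $I_hv-v_h$, and then uses the triangle inequality $\|D(I_hv-v_h)\|\le\|D(I_hv-v)\|+\|D(v-v_h)\|$; this reintroduces a second interpolation term and leads to $2h^{1/2}\|D^2v\|$ before invoking \eqref{paper4:eq:26}. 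Your approach instead exploits the local structure at the boundary: since $I_hv(b)=v_h(b)=0$, the slope on the last element is exactly $-e(x_{N-1})/h$, and $e(b)=0$ lets you bound $|e(x_{N-1})|$ directly by $h^{1/2}\|De\|$ via Cauchy--Schwarz. This avoids the inverse inequality altogether, is specific to the endpoint rather than yielding a global $L_\infty$ bound on $De$, and produces the slightly sharper constant $c_3+4c_1c_3\sigma^{-2}$ in place of the paper's $2c_3+4c_1c_3\sigma^{-2}$. (Your observation about the leading ``$2$'' in the stated $c_{10}$ is well taken: tracing either argument, that term should carry a factor of $c_3$.)
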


\begin{proof}
We use the inverse inequality
\begin{align*}
\|D\phi_h\|_{L_\infty(a,b)} \le  h^{-\frac12}\|D\phi_h\|, \quad
\phi_h\in V_h.
\end{align*}
To prove this we note that
\begin{align*}
D\phi_h(x)
=h^{-1}\int_{x_{j-1}}^{x_{j}} D\phi_h(y)\,dy, 
\quad x\in(x_{j-1},x_{j}),\ h=x_{j}-x_{j-1}, 
\end{align*}
which yields
\begin{align*}
   |D\phi_h(x)|\le h^{-\frac12}\|D\phi_h\|_{L_2(x_{j-1},x_{j})} 
   \le h^{-\frac12}\|D\phi_h\|, 
   \quad x\in (x_{j-1},x_{j}).  
\end{align*}
Hence, by \eqref{paper4:eq:12a} and \eqref{paper4:eq:11}, 
\begin{align*}
\begin{split}
\|De\|_{L_\infty(a,b)} 
 &\le   \|D(v-I_hv)\|_{L_\infty(a,b)}
+ \|D(I_hv-v_h)\|_{L_\infty(a,b)} 
 \\&\le   \|D(v-I_hv)\|_{L_\infty(a,b)}
+ h^{-\frac12}\|D(I_hv-v_h)\|
 \\&\le   \|D(v-I_hv)\|_{L_\infty(a,b)}
+ h^{-\frac12}\|D(I_hv-v)\|
+ h^{-\frac12}\|D(v-v_h)\|
\\& \le 2h^{\frac12}\|D^2v\|+ h^{-\frac12}\|D(v-v_h)\|  
\le (2+4c_1c_3\sigma^{-2})h^{\frac12}\|f\| .
\end{split}
\end{align*}
Therefore 
\begin{align*}
|v'(b)-v_h'(b)| \le  (2+4c_1c_3\sigma^{-2})h^{\frac12}\|f\| .
\end{align*} 
\end{proof}
\noindent In particular, with $f(x)=-\mu x$, Corollary~\ref{paper4:coro1} gives
\begin{align}\label{paper4:final}
|v'(b)-v_h'(b)| \le c_{11}h^{\frac12},\quad c_{11}=c_{10}\mu\sqrt{\frac{b^3-a^3}{3}}.
\end{align}

Given numerical values for the parameters $a,b,\sigma,\mu,\lambda$ we
may now compute numerical values for $h_0$ and $c_{11}$.
Alternatively, we may conclude that there are uniform bounds $h_0\ge
\hat{h}_0$, $c_{11}\le \hat{c}_{11}$ for $b\in[b_1,b_2]$ and with the
other parameters fixed.  

\subsection{The free boundary value problem}  \label{paper4:subsec:3} 

We use uniform meshes $\mathcal{T}_h$ with
\[
x_j-x_{j-1}=h=\frac{b-a}{N}, \quad j=1,\dots,N.
\]
Since we want to vary $b$, we parametrize by $N$ instead of $h$. Let $f(x)=-\mu x$, fix $a<0$ and let $v$, $v_N$ denote the solutions of \eqref{paper4:eq:5} and \eqref{paper4:eq:5b} for $b>a$.   Define the functions 
\begin{align*}
 F(b)=v^{\prime}(b),  \quad
F_N(b)=v_N^{\prime}(b).  
\end{align*} 
From \eqref{paper4:final}, we get for $a<b_1<b_2$ 
\begin{align}\label{paper4:uni_b_est}
\begin{split}
  \| F-F_N\|_{L_\infty(b_1,b_2)}&\le \hat c_{12}N^{-\frac12}, \quad N\ge \hat N_0, \\ 
\hat c_{12}&=\hat c_{11}(b_2-a)^{\frac12}, \quad \hat N_0=\frac{b_2-a}{\hat h_0}.  
\end{split}
\end{align}
By writing down the matrix equation for solving the finite element problem \eqref{paper4:eq:5b}, it is easy to see that, for fixed $N$, the function $b\mapsto F_N(b)$ is continuous on $(a,\infty)$. From \eqref{paper4:uni_b_est} we conclude that $b\mapsto F(b)$ is also continuous on $(a,\infty)$. Moreover, by a direct consequence of the strong maximum principle and the Hopf boundary point principle for our equation (see \cite[Theorem 3.1.4-3.1.5]{Garroni_Menaldi}), we get the following:
\begin{lemma}
If $a<b\leq 0$, then $F(b)<0$. In particular, if $(u,b)$ is a solution to the free boundary problem \eqref{paper4:FBP}, then $b>0$. 
\end{lemma}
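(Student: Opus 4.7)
The plan is to obtain $F(b)=v'(b)<0$ by first showing that $v\ge0$ on $(a,b)$, upgrading to $v>0$ on the open interval by a strong maximum principle, and then applying the Hopf boundary point lemma at the endpoint $b$. The second statement then follows immediately by contraposition: the free boundary condition $u'(b)=1$ is equivalent to $v'(b)=F(b)=0$, which by the first assertion rules out $b\le0$.

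For the first step I note that with $f(x)=-\mu x$ and $b\le0$ we have $f(x)=-\mu x\ge 0$ for every $x\in(a,b)$, and $f$ is not identically zero since $a<0$. Thus $v$ satisfies $(\mathcal{L}-\mathcal{I})v=f\ge 0$ in $(a,b)$ together with $v=0$ outside $(a,b)$. Applied to $-v$, the maximum principle quoted after \eqref{paper4:c_4} (Theorem~3.1.3 of \cite{Garroni_Menaldi}) yields $\max_{[a,b]}(-v)=\max_{\mathbb{R}\setminus(a,b)}(-v)=0$, i.e.\ $v\ge 0$ on $[a,b]$. Moreover $v\not\equiv 0$, for otherwise the equation would force $-\mu x\equiv 0$ on $(a,b)$, contradicting $a<b\le0$.

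Next I invoke the strong maximum principle (Theorem~3.1.4 of \cite{Garroni_Menaldi}): since $-v$ attains its maximum $0$ at some interior or boundary point and is not constant, it cannot attain this maximum in the interior, so $v>0$ on $(a,b)$. Finally, the Hopf boundary point principle (Theorem~3.1.5 of \cite{Garroni_Menaldi}) applied at the right endpoint $b$, where $v(b)=0$ and $v>0$ on $(a,b)$, yields a strict sign for the one-sided derivative: $v'(b)<0$. This is exactly $F(b)<0$.

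The main conceptual issue is that the maximum and Hopf principles must be used in the nonlocal setting, but this is precisely the content of the results in \cite[Ch.~3]{Garroni_Menaldi} cited in the lemma's statement, where the integral term contributes nonnegatively at an interior maximum and thus does not disturb the classical argument. Once $F(b)<0$ is established for every $a<b\le 0$, the second assertion is automatic: a solution $(u,b)$ of \eqref{paper4:FBP} satisfies $v'(b)=u'(b)-1=0$, and since $F(b)<0$ whenever $b\le 0$, we must have $b>0$.
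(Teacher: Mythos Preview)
Your proof is correct and follows exactly the route indicated in the paper: the lemma is stated there as a direct consequence of the strong maximum principle and the Hopf boundary point principle from \cite[Theorems 3.1.4--3.1.5]{Garroni_Menaldi}, and you have simply filled in the details of this argument. One minor sharpening: since $x<b\le0$ for every $x\in(a,b)$, you actually have $f(x)=-\mu x>0$ strictly on the open interval, which makes $v\not\equiv0$ immediate.
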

We believe that there exists a unique $b>0$ such that
$F(b)=0$. We are not able to provide a rigorous proof of this, but
numerical simulations present strong evidence in the following
way. Assign numerical values to the parameters $a,\sigma,\mu,\lambda$
and fix a jump density $\varphi$. In all our computations, we took
$\varphi$ to be the truncated normal distribution with mean zero, variance
$\gamma>0$ and support $[-J,J]$, i.e.\
\[
\varphi(y)=\begin{cases}
\begin{aligned}
&\frac{e^{-\frac{y^2}{2\gamma^2}}}{\gamma\sqrt{2\pi}\left(2\Phi(J/\gamma)-1\right)}  \quad &&\text{if} \quad -J<y<J, \\
&0 \quad &&\text{otherwise,}
\end{aligned}
\end{cases}
\] 
where
\[
\Phi(x)=\frac{1}{\sqrt{2\pi}}\int_{-\infty}^x e^{-\frac{y^2}{2}}\,dy, \quad x\in\RR.
\]
From computations of the boundary value problem
\eqref{paper4:eq:5b} (see Figures~\ref{paper4:u_prime_b} and \ref{paper4:FBP_sol}), we can find $0\leq b_1<b_2$ and $\tilde N\geq \hat N_0$ such that
\[
F_{\tilde N}(b_1)\leq -\frac{1}{2}, \quad F_{\tilde N}(b_2)\geq \frac{1}{2}, \quad \text{and} \quad\hat c_{12}\tilde N^{-\frac{1}{2}}<\frac{1}{4}.
\]
(The $1/2$ and $1/4$ may vary if we change the parameters.) From \eqref{paper4:uni_b_est}, we can then conclude that
\begin{align*}
\begin{aligned}
F(b_1)&<0,  &&\;\,\, F(b_2)>0, \\
F_N(b_1)&<0,  &&F_N(b_2)>0 \quad \text{for all $N\geq \tilde N$.}
\end{aligned}
\end{align*}
Hence, there exists $b\in (b_1,b_2)$ such that $F(b)=0$ and for each $N\geq \tilde N$ there exists $b_N\in(b_1,b_2)$ such that $F_N(b_N)=0$. Moreover, \eqref{paper4:uni_b_est} gives us that
\[
\lim_{N\to\infty}F(b_N)=0.
\] 
Of course, we cannot conclude that $b$ is unique and $b_N\to b$ as
$N\to\infty$. However, Figure~\ref{paper4:u_prime_b} suggests that $b$
is unique and from computations with increasing $N$, it seems like $b_N$ converges, see Table~\ref{paper4:conv_bN}. 

We now discuss whether the properties $a)$ and $b)$ in the statement of
Theorem~\ref{paper4:ver_thm} hold for a solution $(u,b)$ of
\eqref{paper4:FBP}. We have no rigorous proof, but computational
evidence. The properties $a)$ and $b)$ boil down to
\begin{align}\label{paper4:prop_a}
\lambda \int_a^b v(y)\varphi(y-x)\,dy\leq \mu x,\quad \text{for $x>b$},
\end{align}
and $v\geq 0$ respectively, where $(v,b)$ solves
\eqref{paper4:FBP2}. We believe that $v\geq 0$ holds for all values of
the parameters, but computations suggests that \eqref{paper4:prop_a}
may fail for certain parameter values, typically when $\sigma$ is
small and $\lambda$ is three or four times larger than $\mu$. See Figures~\ref{paper4:cond_gen2} and \ref{paper4:cond_gen1},
  where we check \eqref{paper4:prop_a} for $(v_N,b_N)$ instead of $(v,b)$. 
\begin{figure}[!h]
\begin{center}
\scalebox{\scalefactor}{
%
%
\begin{psfrags}%
\psfragscanon%
%
\psfrag{s03}[t][t]{\color[rgb]{0,0,0}\setlength{\tabcolsep}{0pt}\begin{tabular}{c}$b$\end{tabular}}%
\psfrag{s04}[b][b]{\color[rgb]{0,0,0}\setlength{\tabcolsep}{0pt}\begin{tabular}{c}$F_N(b)$\end{tabular}}%
%
\psfrag{x01}[t][t]{0}%
\psfrag{x02}[t][t]{0.02}%
\psfrag{x03}[t][t]{0.04}%
\psfrag{x04}[t][t]{0.06}%
\psfrag{x05}[t][t]{0.08}%
\psfrag{x06}[t][t]{0.1}%
%
\psfrag{v01}[r][r]{-1}%
\psfrag{v02}[r][r]{-0.5}%
\psfrag{v03}[r][r]{0}%
\psfrag{v04}[r][r]{0.5}%
\psfrag{v05}[r][r]{1}%
\psfrag{v06}[r][r]{1.5}%
\psfrag{v07}[r][r]{2}%
%
\resizebox{12cm}{!}{\includegraphics{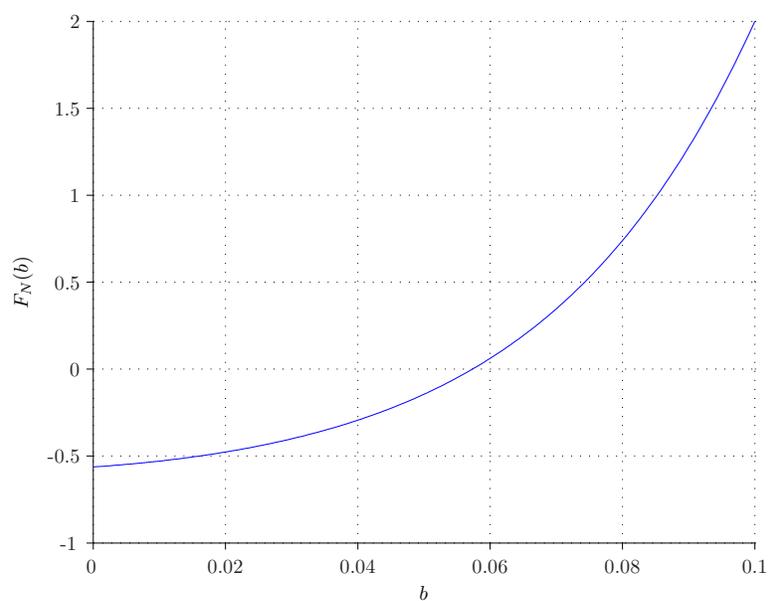}}%
\end{psfrags}%
%
}
\caption{The function $F_N$ when $a=-0.1$, $\lambda=10$, $\sigma=0.2$, $\mu=\frac{\sigma^2}{0.005}$, $\gamma=0.02$ and $J=0.05$.}
\label{paper4:u_prime_b}
\end{center}
\end{figure}
\begin{table}[!h]
\begin{center}
\begin{tabular}{c|c}
$N$ & $b_N$ \\
\hline
2000 & 0.0572939 \\
\hline
4000 & 0.0572743 \\
\hline
6000 & 0.0572678 \\
\hline
8000 & 0.0572653
\end{tabular} 
\caption{$a=-0.1$, $\lambda=10$, $\sigma=0.2$, $\mu=\frac{\sigma^2}{0.005}$, $\gamma=0.02$ and $J=0.05$.}
\label{paper4:conv_bN}
\end{center}
\end{table}
\begin{figure}[!h]
\begin{center}
\scalebox{\scalefactor}{
%
%
\begin{psfrags}%
\psfragscanon%
%
\psfrag{s03}[t][t]{\color[rgb]{0,0,0}\setlength{\tabcolsep}{0pt}\begin{tabular}{c}$x$\end{tabular}}%
\psfrag{s04}[b][b]{\color[rgb]{0,0,0}\setlength{\tabcolsep}{0pt}\begin{tabular}{c}$v_N (x)$\end{tabular}}%
\psfrag{s05}[l][l]{\color[rgb]{0,0,0}\setlength{\tabcolsep}{0pt}\begin{tabular}{l}$b_N=0.0573$\end{tabular}}%
%
\psfrag{x01}[t][t]{-0.1}%
\psfrag{x02}[t][t]{-0.05}%
\psfrag{x03}[t][t]{0}%
\psfrag{x04}[t][t]{0.05}%
%
\psfrag{v01}[r][r]{0}%
\psfrag{v02}[r][r]{0.01}%
\psfrag{v03}[r][r]{0.02}%
\psfrag{v04}[r][r]{0.03}%
\psfrag{v05}[r][r]{0.04}%
\psfrag{v06}[r][r]{0.05}%
%
\resizebox{12cm}{!}{\includegraphics{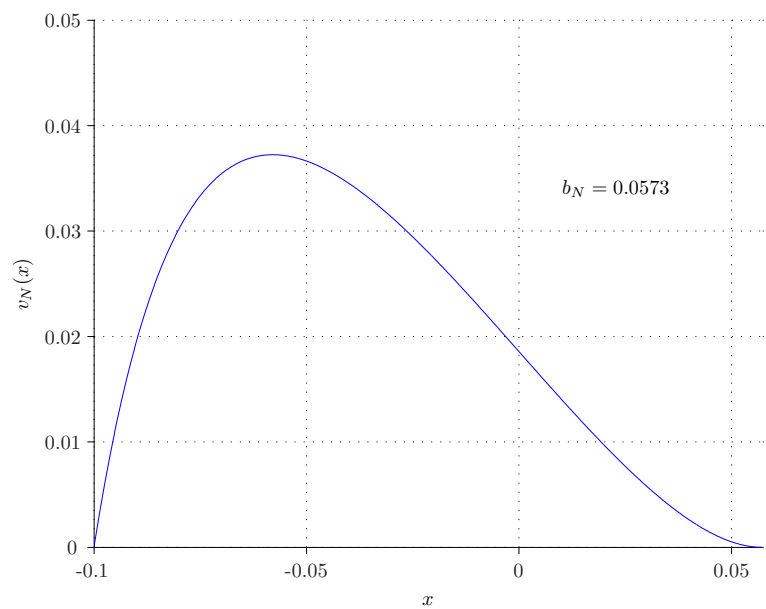}}%
\end{psfrags}%
%
}
\caption{The solution $(v_N,b_N)$ when $a=-0.1$, $\lambda=10$, $\sigma=0.2$, $\mu=\frac{\sigma^2}{0.005}$, $\gamma=0.02$ and $J=0.05$.}
\label{paper4:FBP_sol}
\end{center}
\end{figure}
\begin{figure}[!h]
\begin{center}
\scalebox{\scalefactor}{
%
%
\begin{psfrags}%
\psfragscanon%
%
\psfrag{s01}[t][t]{\color[rgb]{0,0,0}\setlength{\tabcolsep}{0pt}\begin{tabular}{c}$x$\end{tabular}}%
\psfrag{s02}[l][l]{\color[rgb]{0,0,0}\setlength{\tabcolsep}{0pt}\begin{tabular}{l}$x\mapsto\lambda\int_a^{b_N} v_N (y)\varphi(y-x)\, dy$\end{tabular}}%
\psfrag{s03}[l][l]{\color[rgb]{0,0,0}\setlength{\tabcolsep}{0pt}\begin{tabular}{l}$x\mapsto\mu x$\end{tabular}}%
\psfrag{s04}[l][l]{\color[rgb]{0,0,0}\setlength{\tabcolsep}{0pt}\begin{tabular}{l}$b_N=0.0560$\end{tabular}}%
%
\psfrag{x01}[t][t]{0.06}%
\psfrag{x02}[t][t]{0.07}%
\psfrag{x03}[t][t]{0.08}%
\psfrag{x04}[t][t]{0.09}%
\psfrag{x05}[t][t]{0.1}%
%
\psfrag{v01}[r][r]{0}%
\psfrag{v02}[r][r]{0.2}%
\psfrag{v03}[r][r]{0.4}%
\psfrag{v04}[r][r]{0.6}%
\psfrag{v05}[r][r]{0.8}%
\psfrag{v06}[r][r]{1}%
\psfrag{v07}[r][r]{1.2}%
%
\resizebox{12cm}{!}{\includegraphics{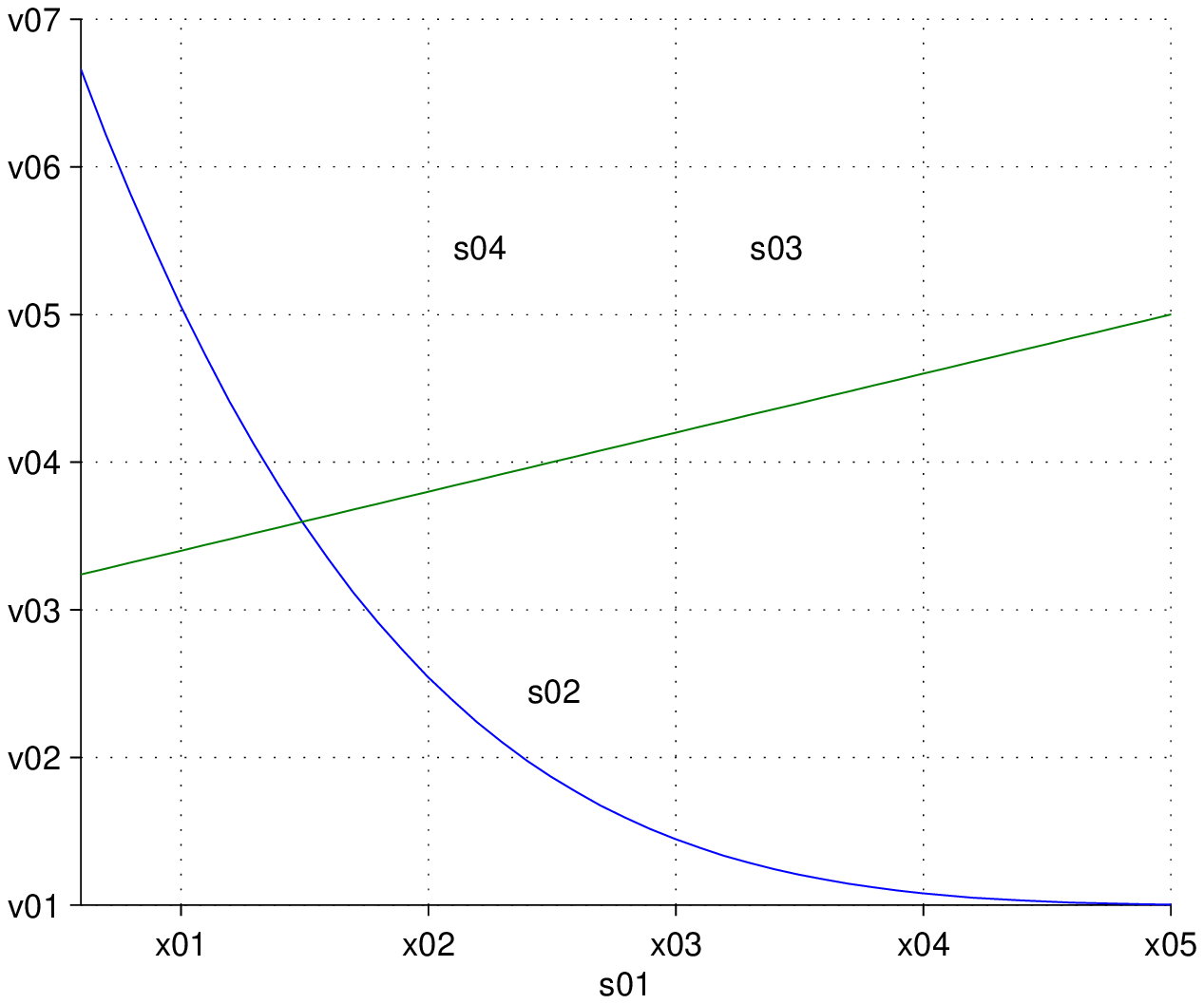}}%
\end{psfrags}%
%
}
\caption{A simulation of \eqref{paper4:prop_a} when $a=-0.1$,
  $\lambda=30$, $\sigma=0.2$, $\mu=\frac{\sigma^2}{0.005}$, $\gamma=0.02$ and $J=0.05$. The condition
  fails.}
\label{paper4:cond_gen2}
\end{center}
\end{figure}

\begin{figure}[!h]
\begin{center}
\scalebox{\scalefactor}{
%
%
\begin{psfrags}%
\psfragscanon%
%
\psfrag{s04}[t][t]{\color[rgb]{0,0,0}\setlength{\tabcolsep}{0pt}\begin{tabular}{c}$x$\end{tabular}}%
\psfrag{s05}[l][l]{\color[rgb]{0,0,0}\setlength{\tabcolsep}{0pt}\begin{tabular}{l}$x\mapsto\lambda\int_a^{b_N} v_N (y)\varphi(y-x)\, dy$\end{tabular}}%
\psfrag{s06}[l][l]{\color[rgb]{0,0,0}\setlength{\tabcolsep}{0pt}\begin{tabular}{l}$x\mapsto\mu x$\end{tabular}}%
\psfrag{s07}[l][l]{\color[rgb]{0,0,0}\setlength{\tabcolsep}{0pt}\begin{tabular}{l}$b_N=0.0573$\end{tabular}}%
%
\psfrag{x01}[t][t]{0.06}%
\psfrag{x02}[t][t]{0.065}%
\psfrag{x03}[t][t]{0.07}%
\psfrag{x04}[t][t]{0.075}%
\psfrag{x05}[t][t]{0.08}%
%
\psfrag{v01}[r][r]{0}%
\psfrag{v02}[r][r]{0.2}%
\psfrag{v03}[r][r]{0.4}%
\psfrag{v04}[r][r]{0.6}%
\psfrag{v05}[r][r]{0.8}%
%
\resizebox{12cm}{!}{\includegraphics{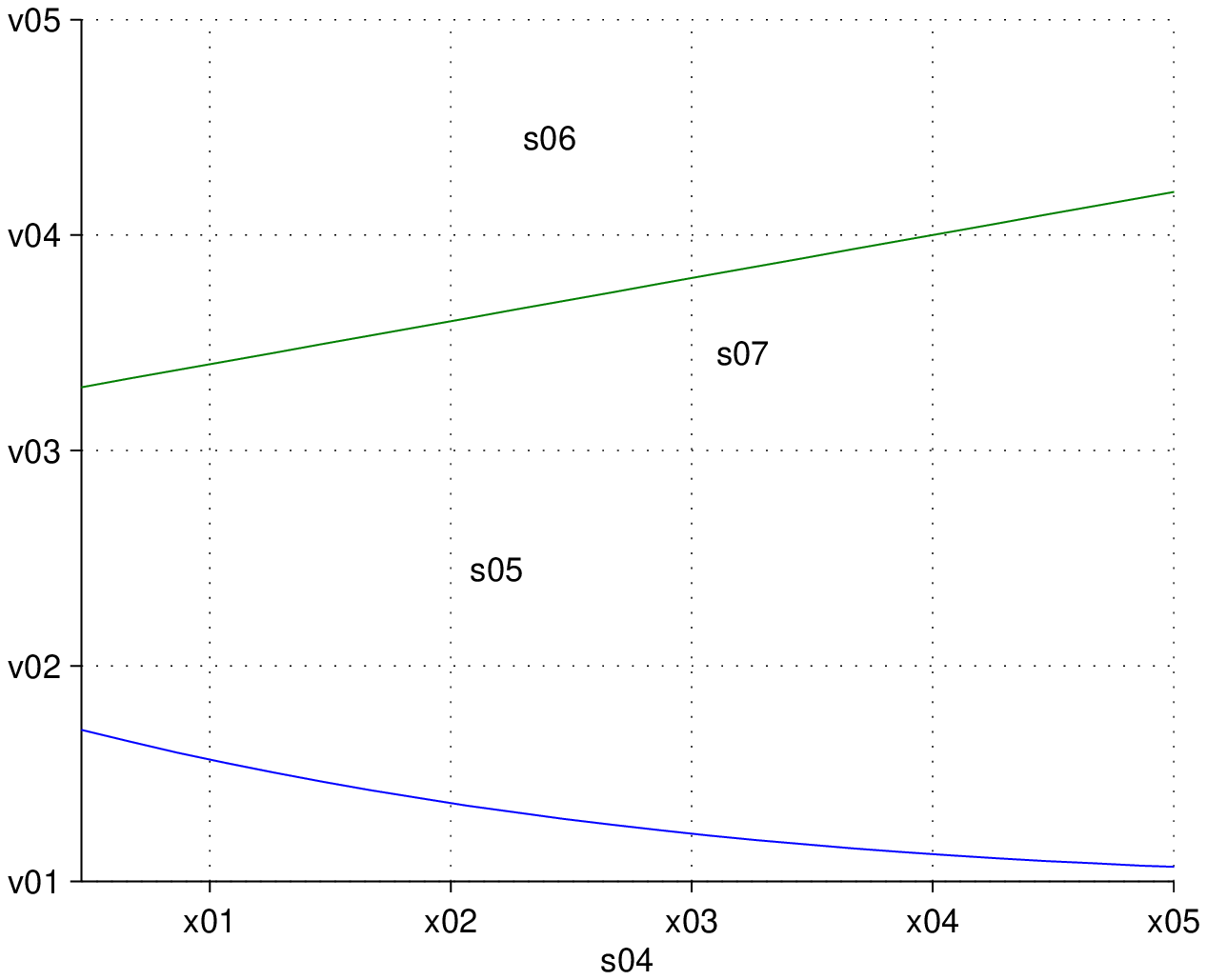}}%
\end{psfrags}%
%
}
\caption{A simulation of \eqref{paper4:prop_a} when $a=-0.1$,
  $\lambda=10$, $\sigma=0.2$, $\mu=\frac{\sigma^2}{0.005}$, $\gamma=0.02$ and $J=0.05$. The condition
  holds.}
\label{paper4:cond_gen1}
\end{center}
\end{figure}




\bibliographystyle{amsplain}
\bibliography{/home/warfheimer/Backup_remote/Forskning/Paper1/references}




\end{document}